\renewcommand{\baselinestretch}{1.6}
\def\MCD{{\mbox{\small MCD}}}
\def\OA{{\mbox{\small OA}}}
\def\OA{{\mbox{\small OA}}}
\def\LHD{{\mbox{\small LHD}}}
\def\MCD{{\mbox{\small MCD}}}
\renewcommand{\arraystretch}{1.2}
\newtheorem{lem}{Lemma}
\newtheorem{theorem}{Theorem}
\newtheorem{proposition}{Proposition}
\newtheorem{example}{Example}
\newtheorem{defi}{Definition}
\date{}
\begin{document}

\title{\bf Construction of Marginally Coupled Designs by Subspace Theory}

\author{\small Yuanzhen He$^1$, C. Devon Lin$^{2}$ and Fasheng Sun$^3$\footnote{Corresponding author; E-mail: sunfs359@nenu.edu.cn.} \\
\small \em $^1$Beijing Normal University, $^2$Queen's University, $^{3}$Northeast Normal University}

\maketitle
\begin{abstract}
Recent researches on designs for computer experiments with both qualitative and quantitative factors have advocated the use of marginally coupled designs.
This paper proposes a general method of constructing such designs for which the designs for qualitative factors are multi-level orthogonal arrays and
the designs for quantitative factors are  Latin hypercubes with desirable space-filling properties.   Two cases
 are introduced for which we can obtain the guaranteed low-dimensional space-filling property
 for quantitative factors. Theoretical results on the proposed constructions are derived. For practical use, some constructed designs for three-level qualitative factors are tabulated.

\noindent {\it Key words and phrases:}
Cascading Latin hypercube, computer experiment, Latin hypercube, lower-dimensional projection, orthogonal array.

\end{abstract}

\section{Introduction}

Computer experiments with both qualitative and quantitative variables are becoming increasingly common
(see, for example, Rawlingson et al., 2006; Qian, Wu and Wu, 2008; Han et al., 2009; Zhou, Qian and Zhou, 2011; Deng et al., 2017). Extensive studies have been devoted to design and modeling of such experiments.
This article focuses on a particular class of designs, namely, {\em marginally  coupled designs}, which have been argued to be a cost-effective design choice (Deng, Hung and Lin, 2015). The goal here is to propose a general method for constructing marginally coupled designs when the design for qualitative variables is a multi-level orthogonal array.

The first systematical plan  to accommodate computer experiments with both qualitative and quantitative variables is sliced Latin hypercube designs
proposed by Qian and Wu (2009). In such a design, for each level combination of the qualitative factors,
the corresponding design for the quantitative factor is a small Latin hypercube (McKay, Beckman and Conover, 1979).
The run size of a sliced Latin hypercube design increases dramatically  with the number of the qualitative factors.
To accommodate a large number of qualitative factors with an economical run size, Deng, Hung and Lin (2005) introduced marginally coupled designs which possess the property that with respect to each level of each qualitative variable, the corresponding design for quantitative variables is a sliced Latin hypercube design.  Other enhancements of sliced Latin hypercubes include multi-layer sliced Latin hypercube designs (Xie et al., 2014), clustered-sliced Latin hypercube designs (Huang et al., 2016), bi-directional sliced Latin hypercube designs (Zhou et al., 2016).


Since being introduced by Deng, Hung and Lin (2015), there have been two developments of marginally coupled designs, due to He, Lin and Sun (2017) and He et al. (2017), respectively.  Comparing with the original work, both developments provide designs for quantitative factors without clustered points, thereby improving the space-filling property which refers to spreading out points in the design region as evenly as possible (Lin and Tang, 2015).
He, Lin and Sun (2017) constructs marginally coupled designs of $s^u$ runs that can accommodate $(s+1-k)s^{u-2}$ qualitative factors and $k$ quantitative factors for a prime power $s$ and $1 \leq k < s+1$.
The drawback of this method is when $s=2$, the corresponding designs can accommodate only up to $3$ quantitative factors.
 He et al. (2017)  addressed this issue and introduced a method for constructing marginally coupled designs of $2^u$ runs
for $2^{u_1-1}$ qualitative factors of two levels and up to $2^{u-u_1}$ quantitative factors, where $1 \leq u_1 \leq u$.

The paper aims to construct marginally coupled designs of $s^u$ runs in which designs for qualitative factors are $s$-level orthogonal arrays  for a prime power $s$ and any positive integer $u$. The primary technique in the proposed construction is the subspace theory of Galois field $GF(s^u)$. Although such a technique was used in the constructions in He et al. (2017) for $s=2$, it is not trivial to generalize their constructions for any prime power $s$. Extra care must be taken in the generalization. The other contribution of this article is to   introduce two cases
for which guaranteed low-dimensional space-filling property  for quantitative factors can be obtained.
For example, for $s=2$, the designs of $2^u$ runs for quantitative factors achieve stratification on a $2 \times 2 \times 2$ grid of any three dimensions.

The remainder is arranged as follows. Section 2 introduces background and preliminary results.
New constructions and the associated theoretical results are presented in Section 3.
Section 4 tabulates the designs with three-level qualitative factors.
The space-filling property of the newly constructed designs is discussed in Section 5,
and the last section concludes the paper. All the proofs are relegated to Appendix.

\section{Background and Preliminary Results}

\subsection{Background}

A matrix of size $n\times m$, where the $j$th column has $s_j$ levels $0,\ldots,s_j-1$, is called an orthogonal array of strength $t$,
if for any $n\times t$ sub-array, all possible level combinations appear equally often.
It is denoted by $\OA(n, s_1\cdots s_m, t)$   and  the simplified notation $\OA(n, s^{u_1}_1s^{u_2}_2\cdots s^{u_k}_k, t)$ will be used if the first $u_1$ columns have $s_1$ levels, the next $u_2$ columns  have $s_2$ levels, and so on. If $s_1=\cdots=s_m=s$,
it is shortened as $\OA(n, m, s, t)$. If all rows of an $\OA(n, m, s, t)$ can form a vector space,
it is called a linear orthogonal array (Hedayat, Sloane and Stufken, 1999).
For a prime power $s$, let $GF(s)=\{\alpha_0, \alpha_1, \ldots, \alpha_{s-1}\}$ be a Galois field of order $s$,
where $\alpha_0=0$ and $\alpha_1=1$. Throughout this paper, unless otherwise specified, entries of any $s$-level array  are from $GF(s)$.
For a set $S$, {\bf $|S|$} represents the number of elements in $S$.

A Latin hypercube is an $n\times k$ matrix each column of which is a random permutation of $n$ equally spaced levels (McKay, Beckman and Conover, 1979).
In this article, these $n$ levels are represented by $0, \ldots, n-1$, and a Latin hypercube  of $n$ runs for $k$ factors is denoted by $\LHD(n, k)$.
A special type of Latin hypercubes is a {\em cascading Latin hypercube} for which with $n=n_1n_2$ points and levels $(n_1, n_2)$ is an $n_2$-point Latin hypercube  about each point in the $n_1$-point Latin hypercube (Handcock, 1991).
Latin hypercubes can be obtained from orthogonal arrays.
Given an $\OA(n, m, s, t)$, replace the $r=n/s$ positions having level $i$ by a random permutation
of $\{ ir,  \ldots, (i+1)r-1\}$, for $i=0,\ldots,s-1$.
The resulting design achieves $t$-dimensional stratification, and is called an
orthogonal array-based Latin hypercube (Tang, 1993).
This approach is referred to as the {\it level replacement-based Latin hypercube} approach.

Let $D_1$ be an $\OA(n, m, s, 2)$ and $D_2$ be an $\LHD(n, k)$. Design $D=(D_1, D_2)$ is called
a {\em marginally coupled design}, denoted by $\MCD(D_1, D_2)$, if for each level of every column of $D_1$, the corresponding rows
in $D_2$ have the property that when projected onto each column, the resulting entries consist of exactly one level from
each of the $n/s$ equally-spaced intervals $\{[0,s-1],[s,2s-1],\ldots,[n-s,n-1]\}.$  As a space-filling design is generally sought, a $D_2$ in which the whole design or any of its column-wise projections has clustered points shall be avoided.
We define a Latin hypercube $D_2$ to be {\em non-cascading}  if, when projected onto any two distinct columns of $D_2$, the resulting design is not a cascading Latin hypercube of levels $(s,n/s)$.

To study the existence of $\MCD(D_1,D_2)$'s, He, Lin and Sun (2017) defined the matrix
$\tilde{D}_2$ based on $D_2$.  Let $d_{2,ij}$     be the ($i,j)$th entry of $D_2$. The ($i,j$)th entry $\tilde{d}_{2,ij}$ is given by
\begin{equation}\label{eq:tildeD2}
\tilde{d}_{2,ij} =\Big \lfloor d_{2,ij}/s\Big\rfloor,  \ i=1,\ldots, n \ \hbox{and} \  j=1,\ldots,k,
\end{equation}
\noindent where  $\lfloor x\rfloor$ denotes the greatest integer less than or equal to $x$. The operator in (\ref{eq:tildeD2})
scales the levels in the interval $[0,s-1]$ to level 0, the levels in the interval $[s,2s-1]$ to level 1, and so on. Thus, the levels in $\tilde{D}_2$ are  $\{0, 1, \ldots, n/s-1\}$.
On the other hand, design $D_2$ can be obtained from $\tilde{D}_2$  via the {\it level
replacement-based Latin hypercube} approach.
Lemma \ref{lem:D1-D2-condition}  given by  He, Lin and Sun (2017) provides a necessary and sufficient condition
for the existence of an $\MCD(D_1,D_2)$ when $D_1$ is an $s$-level orthogonal array.

\begin{lem}\label{lem:D1-D2-condition}
Given that $D_1$ is an $\OA(n, m, s, 2)$, $D_2$ is an $\LHD(n, k)$ and $\tilde{D}_2$ is defined via (\ref{eq:tildeD2}),
then $(D_1, D_2)$ is a marginally coupled design if and only if for $j=1,\ldots,k$,
$(D_1, {\bf d}_j)$ is an $\OA(n, s^m(n/s), 2)$, where ${\bf d}_j$ is the $j$th column  of $\tilde{D}_2$.
\end{lem}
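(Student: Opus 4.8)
The plan is to prove both implications simultaneously by recognizing that the two conditions are just different encodings of the same pairwise-balance requirement. The crucial observation is that the floor operation in (\ref{eq:tildeD2}) sets up a bijection between the $n/s$ equally-spaced intervals $[vs,(v+1)s-1]$, $v=0,\ldots,n/s-1$, and the $n/s$ levels of $\tilde{D}_2$: an entry $d_{2,ij}$ lies in the $v$th interval if and only if $\tilde{d}_{2,ij}=v$. Consequently, the assertion ``the projection of $D_2$ onto column $j$ contributes exactly one level from each interval, over a prescribed set of rows'' is literally the same as the assertion ``the $j$th column of $\tilde{D}_2$ takes each of its $n/s$ values exactly once over those rows.'' Establishing this dictionary is the conceptual heart of the argument.

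First I would reduce the right-hand condition. Since ${\bf d}_j$ is a single column appended to $D_1$, the array $(D_1,{\bf d}_j)$ is an $\OA(n,s^m(n/s),2)$ precisely when every pair of its columns is balanced. Pairs of columns both lying in $D_1$ are automatically balanced because $D_1$ already has strength two, so the only genuinely new requirement is that each column $c$ of $D_1$ be balanced with ${\bf d}_j$; that is, for every level $\ell\in\{0,\ldots,s-1\}$ of column $c$ and every value $v\in\{0,\ldots,n/s-1\}$ of ${\bf d}_j$, the combination $(\ell,v)$ occurs in exactly one row, since the pair has $s\cdot(n/s)=n$ possible level combinations spread over $n$ rows. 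Because strength two forces column $c$ to take level $\ell$ in exactly $n/s$ rows, this balance condition is equivalent to saying that, restricted to those $n/s$ rows, ${\bf d}_j$ runs through each of its $n/s$ distinct values exactly once.

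Next I would reduce the left-hand condition. By the definition of a marginally coupled design, $(D_1,D_2)$ is an $\MCD$ if and only if, for every column $c$ of $D_1$, every level $\ell$ of $c$, and every column $j$ of $D_2$, the entries $d_{2,ij}$ over the rows $i$ having $D_1$-entry $\ell$ in column $c$ contain exactly one member of each of the $n/s$ intervals. Applying the bijection of the first paragraph translates this verbatim into the statement that, over those same rows, ${\bf d}_j$ hits each of its $n/s$ values exactly once. This is identical to the condition extracted from the right-hand side, so quantifying over all choices of $c$, $\ell$, and $j$ yields the desired equivalence in both directions at once.

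Because the proof is essentially an unwinding of the two definitions through a single change of variables, there is no deep obstacle; the step demanding the most care is the bookkeeping of multiplicities, namely verifying that strength-two balance for a column pair of $s$ and $n/s$ levels forces each of the $n$ combinations to appear exactly once, and matching this multiplicity with the ``exactly one per interval'' requirement. One should also check that the equivalence holds column-by-column in $j$, so that imposing the orthogonal array property on each single-column augmentation $(D_1,{\bf d}_j)$ is the correct counterpart of the marginally coupled property, which is itself imposed separately on each column of $D_2$.
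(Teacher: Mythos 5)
The paper never proves this lemma: it is imported verbatim from He, Lin and Sun (2017), where it appears as a cited preliminary, so there is no in-paper proof to compare against. Judged on its own merits, your argument is correct and complete, and it is the natural (essentially unique) proof of this equivalence. The dictionary you set up --- $d_{2,ij}$ lies in the $v$th interval $[vs,(v+1)s-1]$ if and only if $\tilde{d}_{2,ij}=v$ --- is exactly the right reduction, and both counting steps are sound: (a) since $D_1$ already has strength two, the only new pairs to check in $(D_1,{\bf d}_j)$ are those formed by a column $c$ of $D_1$ with ${\bf d}_j$, and for such an $s\times(n/s)$-level pair over $n$ rows ``equally often'' forces each of the $n$ level combinations to appear exactly once; (b) each level $\ell$ of each column of $D_1$ occupies exactly $n/s$ rows, so over those rows ``exactly one point per interval'' and ``each of the $n/s$ values of ${\bf d}_j$ exactly once'' are literally the same statement, which closes both directions at once. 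Your final remark about the column-by-column nature of both conditions is also the right observation, since both the MCD property and the hypothesis on $(D_1,{\bf d}_j)$ quantify separately over $j=1,\ldots,k$.
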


In addition to conveniently study the existence of marginally coupled designs, the definition of $\tilde{D}_2$
allows us to determine whether or not $D_2$ is {\em non-cascading}.  By definition, a Latin hypercube $D_2$ is  {\em non-cascading} if any two distinct columns of the corresponding $\tilde{D}_2$
cannot be transformed to each other by level permutations.

\subsection{Preliminary results}


This subsection presents a result that is the cornerstone of the proposed general construction in next section. Although the result itself is trivial, it is important to review the notation, concepts and existing results to help understand the later development. An example is also given to facilitate the understanding. Suppose that we wish to construct  an $\MCD(D_1,D_2)$ with $D_1 = \OA(s^u, m,s,2)$ and $D_2 = \LHD(s^u, k)$. Lemma \ref{lem:D1-D2-condition} indicates that it is equivalent to construct $D_1=( {\bf a}_1, \ldots, {\bf a}_m)$  and  $\tilde D_2=({\bf d}_1, \ldots, {\bf d}_k) = \OA(s^u,k, s^{u-1},1)$ such that  $({\bf d}_j, {\bf a}_i)=\OA(s^u, s^{u-1}\times s,2)$ (Here $s^{u-1}\times s$ means ${\bf d}_j$ has $s^{u-1}$ levels, and ${\bf a}_i$ has $s$ levels) and any distinct two columns ${\bf d}_i$ and ${\bf d}_j$ cannot be transformed to each other by level permutations. This subsection focuses on a construction of an $\OA(s^u, s^{u-1} \times s, 2)$.

First, we review the connection between  an $s^{u-1}$-level column and a $(u-1)$-dimensional subspace of $GF(s^w)$, where $w\geq u-1$.  To see this, note that an $s^{u-1}$-level column can be generated  by  choosing a subarray $A_0=\OA(s^w, u-1,s, u-1)$  from a linear $\OA(s^{w}, m, s, 2)$, say $A$, and substituting each level combination of these columns
by a unique level of $\{0,1, \ldots, s^{u-1}-1\}$ in some manner. This procedure is known as the {\em method of replacement} (Wu and Hamada, 2011).  One method  to achieve the substitution is $A_0\cdot(s^{u-2}, \ldots, s, 1)^T$,     where the superscript $T$ represents the transpose
of a matrix or a vector; this is exactly what we adopt in this paper.
The $A_0$, consisting of $u-1$ independent columns, can also be generated using all linear combinations of rows of a  $w\times (u-1)$ matrix $G$, called the {\it generator matrix} of $A_0$ (Hedayat, Sloane and Stufken, 1999).  In addition, all linear combinations of columns of $G$ form a $(u-1)$-dimensional vector subspace of $GF(s^w)$.
Therefore, an $s^{u-1}$-level column corresponds
to one $(u-1)$-dimensional subspace of $GF(s^w)$, where $w\geq u-1$.

Consider the case of $w=u$. Let $S_u$ consist of $s$-level column vectors of length $u$,
then all of its column vectors form a space of dimension $u$.
For the detail of vector spaces, refer to Horn and Johnson (2015).
For two column vectors ${\bf x}, {\bf y}\in S_u$, if ${\bf x}^T{\bf y}=0$ in $GF(s)$,
  they are said to be orthogonal.
For a nonzero element ${\bf x}\in S_u$, define
\begin{equation}\label{def:O(x)}
O({\bf x})=\{{\bf y}\in S_u\ | \ {\bf y}^T{\bf x}=0\}.
\end{equation}
It can be seen that $O({\bf x})$ is a $(u-1)$-dimensional subspace of $S_u$.

Let $G({\bf x})$ be a $u\times (u-1)$ matrix consisting of
$u-1$ independent columns of $O({\bf x})$. For a vector from $S_u\setminus O({\bf x})$, say ${\bf z}$,
all linear combinations of rows of the matrix $(G({\bf x}), {\bf z})$
can generate an $s^u\times u$ matrix. For ease of presentation, the first $u-1$ columns and the last column of the resulting matrix are denoted by $A({\bf x})$ and ${\bf a}$, respectively.
Applying the {\em method of replacement} to $A({\bf x})$ yields  an $s^{u-1}$-level vector, say ${\bf d}$.
Lemma~\ref{lemma:basic-idea} indicates that the ${\bf d}$ and ${\bf a}$ are orthogonal.

\begin{lem}\label{lemma:basic-idea}
For ${\bf d}$ and ${\bf a}$ constructed above, we have that $({\bf d}, {\bf a})$ is an $OA(s^u, s^{u-1}\times s, 2)$.
\end{lem}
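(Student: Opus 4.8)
The plan is to parametrize the $s^u$ rows of the array generated by $(G({\bf x}),{\bf z})$ and to read off the entries of ${\bf d}$ and ${\bf a}$ as explicit linear functions of the parametrizing coefficient vector. Writing ${\bf r}_1,\dots,{\bf r}_u$ for the rows of the $u\times u$ matrix $M=(G({\bf x}),{\bf z})$, each row of the generated array is a linear combination $\sum_{i=1}^u c_i{\bf r}_i={\bf c}^{T}M$ with ${\bf c}=(c_1,\dots,c_u)^{T}\in GF(s)^{u}$, so the $s^u$ rows are indexed by ${\bf c}$. Splitting $M$ into its first $u-1$ columns $G({\bf x})$ and its last column ${\bf z}$, the row indexed by ${\bf c}$ contributes the $A({\bf x})$-part ${\bf c}^{T}G({\bf x})\in GF(s)^{u-1}$ and the ${\bf a}$-entry ${\bf c}^{T}{\bf z}\in GF(s)$.

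Next I would handle the method of replacement. Applying it to the $A({\bf x})$-part amounts to composing ${\bf c}^{T}G({\bf x})$ with the map $\phi({\bf v})={\bf v}\,(s^{u-2},\dots,s,1)^{T}$, which is precisely the base-$s$ encoding of a length-$(u-1)$ vector over $GF(s)$ into the integer set $\{0,1,\dots,s^{u-1}-1\}$. The key observation is that $\phi$ is a bijection, so the entry of ${\bf d}$ in the row indexed by ${\bf c}$ equals $\phi({\bf c}^{T}G({\bf x}))$, and establishing that $({\bf d},{\bf a})$ is an $\OA(s^u,s^{u-1}\times s,2)$ reduces to showing that every pair in $\{0,\dots,s^{u-1}-1\}\times GF(s)$ is realized exactly once among the $s^u$ rows. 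Because $\phi$ is bijective, this in turn is equivalent to showing that the map ${\bf c}\mapsto({\bf c}^{T}G({\bf x}),{\bf c}^{T}{\bf z})={\bf c}^{T}M$ is a bijection from $GF(s)^{u}$ onto $GF(s)^{u-1}\times GF(s)=GF(s)^{u}$.

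The heart of the argument is then the invertibility of $M$ over $GF(s)$. Since the $u-1$ columns of $G({\bf x})$ are independent elements of the $(u-1)$-dimensional space $O({\bf x})$, they form a basis of $O({\bf x})$; as ${\bf z}$ was chosen from $S_u\setminus O({\bf x})$, it does not lie in their span, so $M$ has $u$ linearly independent columns and is invertible. Consequently ${\bf c}\mapsto{\bf c}^{T}M$ is a bijection of $GF(s)^{u}$, each combination of a level of ${\bf d}$ and a level of ${\bf a}$ occurs exactly once among the $s^u$ rows, and the strength-$2$ property follows. I do not expect a substantive obstacle here; the only point requiring care is the bookkeeping that turns ``all linear combinations of the rows of $(G({\bf x}),{\bf z})$'' into the coefficient-vector description ${\bf c}^{T}M$, together with verifying that the method of replacement contributes a bijection and hence preserves the equal-occurrence count.
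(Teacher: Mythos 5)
Your proof is correct: the invertibility of the $u\times u$ matrix $(G({\bf x}),{\bf z})$ over $GF(s)$ (basis of $O({\bf x})$ plus a vector outside it) makes the generated array a full factorial on $GF(s)^u$, and composing the first $u-1$ coordinates with the bijective base-$s$ replacement map yields each of the $s^{u-1}\times s$ level pairs exactly once, which is exactly the strength-$2$, index-unity property claimed. The paper itself gives no explicit proof of this lemma (it is declared trivial in Section 2.2 and does not appear in the Appendix), and your argument is precisely the standard one the paper implicitly relies on, so there is nothing to add or repair.
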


\begin{example}
For $s=u=3$, we have $GF(3)=\{0, 1, 2\}$ and $S_3=\{(x_1, x_2, x_3)^T \mid x_i\in GF(3), i=1, 2,3\}$. Consider ${\bf x}=(1, 2, 0)^T$, and we have
\begin{equation*}
O({\bf x}) =
\left(\begin{array}{ccc ccc ccc}
  0 & 0 & 0 & 1 & 1 & 1 & 2 & 2 & 2 \\[-6pt]
  0 & 0 & 0 & 1 & 1 & 1 & 2 & 2 & 2 \\[-6pt]
  0 & 1 & 2 & 0 & 1 & 2 & 0 & 1 & 2
  \end{array}
  \right),
\end{equation*}
\noindent and the dimension of $O({\bf x})$ is 2. Choose two independent columns $(0, 0, 1)^T$ and $(1, 1, 0)^T$ from $O({\bf x})$, and column-combining them gives
$G({\bf x})$.
For ${\bf z}=(1, 2, 0)^T\in S_3\setminus O({\bf x})$,
$(G({\bf x}), {\bf z} )$ generates a $27\times 3$ matrix $(A({\bf x}), {\bf a})$, whose transpose
is as follows
\[
\left(\begin{array}{ccccccccc ccccccccc ccccccccc}
0&1&2&0&1&2&0&1&2&0&1&2&0&1&2&0&1&2&0&1&2&0&1&2&0&1&2\\[-6pt]
0&0&0&1&1&1&2&2&2&1&1&1&2&2&2&0&0&0&2&2&2&0&0&0&1&1&1\\[-6pt]
0&0&0&2&2&2&1&1&1&1&1&1&0&0&0&2&2&2&2&2&2&1&1&1&0&0&0
 \end{array}
 \right).
\]
By the method of replacement, let ${\bf d}=A({\bf x})\cdot(3, 1)^T$. Then $({\bf d}, {\bf a})$ is an $OA(27, 9\times 3, 2)$ whose
transpose is
\[
\left(\begin{array}{ccccccccc ccccccccc ccccccccc}
0&3&6&1&4&7&2&5&8&1&4&7&2&5&8&0&3&6&2&5&8&0&3&6&1&4&7\\[-6pt]
0&0&0&2&2&2&1&1&1&1&1&1&0&0&0&2&2&2&2&2&2&1&1&1&0&0&0
\end{array}\right).
\]
\end{example}

\section{Construction}\label{sec:2}

This section introduces a general construction and a subspace construction for marginally coupled designs using a set of vectors from $S_u$. For each construction, a necessary condition for the set of vectors is given. For the given
design parameters $s, u, u_1$, two constructions provide marginally coupled designs with different numbers of qualitative factors and quantitative factors. The key results are summarized in Theorems 1 and 2.

In the following constructions, when choosing nonzero vectors ${\bf x}, {\bf y}$ from $S_u$ to construct
orthogonal arrays or to construct $(u-1)$-dimensional subspaces $O({\bf x})$ and $O({\bf y})$,
we require ${\bf x}\neq \alpha {\bf y}$ for any $\alpha \in GF(s)$. This is because if ${\bf x}=\alpha{\bf y}$
for some $\alpha\in GF(s)$, ${\bf x}$ and ${\bf y}$ generate the columns representing the same factor, and
$O({\bf x})$ and $O({\bf y})$ actually represent the same $(u-1)$-dimensional subspace.

\subsection{General construction}

Suppose we choose $m+k$ vectors ${\bf z}_1, \ldots, {\bf z}_m, {\bf x}_1, \ldots, {\bf x}_k$ from $S_u$,
such that ${\bf z}_i$ is not in any of $O({\bf x}_j)$. We propose the following three-step construction.

\begin{itemize}
  \item[Step 1.]  Obtain  $D_1=({\bf a}_1, \ldots, {\bf a}_m)$ by taking all linear combinations of the rows of $({\bf z}_1, \ldots, {\bf z}_m)$, where ${\bf a}_i$ is the $i$th column of $D_1$;

  \item[Step 2.] For each ${\bf x}_j$, choose $u-1$ independent columns from  $O({\bf x}_j)$ in (\ref{def:O(x)}) to form a generator matrix $G({\bf x}_j)$. Obtain  $A({\bf x}_j)$ by taking all linear combinations of the rows of $G({\bf x}_j)$.  Apply the
                 {\em method of replacement} to obtain an $s^{u-1}$-level column vector ${\bf d}_j$  from $A({\bf x}_j)$. Denote the resulting design by $\tilde D_2=({\bf d}_1, \ldots, {\bf d}_k)$;

  \item[Step 3.]  Obtain $D_2$ from $\tilde D_2$ via the {\it level replacement-based Latin hypercube} approach.
\end{itemize}

The method of obtaining ${\bf d}_j$ and $ {\bf a}_i$ in Steps 1 and 2 in the general construction are essentially the construction in Section 2.2 and thus by  Lemma \ref{lemma:basic-idea}, $({\bf d}_j, {\bf a}_i)$ is an $OA(s^u, s^{u-1}\times s, 2)$.  In addition, $D_1$ is an $\OA(s^u, m, s, 2)$  and $D_2$ is an  $\LHD(s^u,k)$. Therefore,  the $(D_1, D_2)$ is a marginally coupled design. The condition of the construction is to have ${\bf z}_i$ not in any of $O({\bf x}_j)$.
To find such  ${\bf z}_i$'s and ${\bf x}_j$'s, we consider the set of vectors $\{{\bf e}_1, \ldots, {\bf e}_{u_1}\}\subset S_u$,
where ${\bf e}_i$ is a vector of $S_u$ with the $i$th entry equal to 1 and the other entries equal to 0, and
$1\leq u_1\leq u$. We further define
\begin{equation}\label{def:mathcal{A}}
\mathcal{A}=\{ {\bf x}\in S_u\setminus (\cup_{i=1}^{u_1} O({\bf e}_i)) \mid  \mbox{the first entry of ${\bf x}$ is 1}\},
\end{equation}
where $O(\cdot)$ is defined in (\ref{def:O(x)}). The main result of using $\mathcal{A}$ and ${\bf e}_i$'s to construct
$\MCD(D_1,D_2)$'s is provided in Theorem \ref{thm-simple-construction}. Before presenting the theorem, we describe a result which counts the number of vectors in $\mathcal{A}$.

\begin{lem}\label{lemma:general-union-O(e)}
There are $n_A=(s-1)^{u_1-1}s^{u-u_1}$ column vectors in $\mathcal{A}$ in (\ref{def:mathcal{A}}).
\end{lem}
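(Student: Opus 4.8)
The plan is to unravel the set-theoretic description of $\mathcal{A}$ into an elementary coordinate condition and then count directly. The first step is to identify each $O({\bf e}_i)$ explicitly: since ${\bf e}_i$ is the standard basis vector with a $1$ in position $i$, for any ${\bf y}=(y_1,\ldots,y_u)^T\in S_u$ we have ${\bf y}^T{\bf e}_i=y_i$, so that $O({\bf e}_i)=\{{\bf y}\in S_u\mid y_i=0\}$ is precisely the set of vectors whose $i$th coordinate vanishes. This turns the subspace $O({\bf e}_i)$ into a transparent coordinate hyperplane, which is all we need.

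Next I would apply De Morgan's law to rewrite the complement of the union as an intersection of complements. A vector ${\bf y}$ lies in $\cup_{i=1}^{u_1}O({\bf e}_i)$ if and only if $y_i=0$ for at least one index $i\in\{1,\ldots,u_1\}$; hence ${\bf y}\in S_u\setminus(\cup_{i=1}^{u_1}O({\bf e}_i))$ if and only if $y_i\neq 0$ for \emph{every} $i\in\{1,\ldots,u_1\}$. In words, membership in this complement forces the first $u_1$ entries of ${\bf y}$ to be simultaneously nonzero, while the last $u-u_1$ entries remain unrestricted.

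Finally I would impose the remaining defining constraint of $\mathcal{A}$, namely that the first entry equals $1$, and count by the multiplication principle. The first entry is fixed as $y_1=1$ (one choice), each of the entries $y_2,\ldots,y_{u_1}$ ranges over the $s-1$ nonzero elements of $GF(s)$, and each of $y_{u_1+1},\ldots,y_u$ ranges freely over all $s$ elements of $GF(s)$. Multiplying these independent choices yields $1\cdot(s-1)^{u_1-1}\cdot s^{u-u_1}=n_A$, which is the asserted count.

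I expect no genuine obstacle in this argument, since once the coordinate reformulation is in place the remainder is routine enumeration. The only point deserving a moment's care is the complement-of-a-union step: one must remember that excluding the union $\cup_{i=1}^{u_1}O({\bf e}_i)$ imposes a nonvanishing requirement on all of the first $u_1$ coordinates at once, rather than on merely one of them, and that the first coordinate being constrained to the single value $1$ (instead of all $s-1$ nonzero values) is exactly what produces the exponent $u_1-1$ on the factor $s-1$.
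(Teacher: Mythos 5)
Your proof is correct and follows essentially the same route as the paper's: both identify $S_u\setminus(\cup_{i=1}^{u_1}O({\bf e}_i))$ as the set of vectors whose first $u_1$ coordinates are all nonzero, impose $x_1=1$, and count the remaining free coordinates to get $(s-1)^{u_1-1}s^{u-u_1}$. The only difference is presentational (your explicit De Morgan step is left implicit in the paper), so there is nothing to add.
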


The value of $n_A$ is the number of columns in $D_1$ or $D_2$, as revealed in Theorem \ref{thm-simple-construction}.

\begin{theorem}\label{thm-simple-construction}
For  $\{{\bf e}_1,\ldots, {\bf e}_{u_1}\}$ defined above, $\mathcal{A}$ in (\ref{def:mathcal{A}}) and $n_A$ in Lemma \ref{lemma:general-union-O(e)}, if in the general construction we
\begin{itemize}
  \item[(i)] choose ${\bf z}_i={\bf e}_i$ and ${\bf x}_j\in \mathcal{A}$
             for $1\leq i\leq u_1$ and $1\leq j\leq n_A$, an $\text{MCD}(D_1, D_2)$ with $\ D_1=\OA(s^u, u_1, s, u_1), D_2=\LHD(s^u, n_A)$
             can be obtained, or,
  \item[(ii)] choose ${\bf z}_i\in \mathcal{A}$ and ${\bf x}_j={\bf e}_j$ for $1\leq i\leq n_A$ and $1\leq j\leq u_1$, an
              $\text{MCD}(D_1, D_2)$ with $D_1=\OA(s^u, n_A, s, 2), D_2=\LHD(s^u, u_1)$ can be obtained,
\end{itemize}
where both $D_2$'s are non-cascading Latin hypercubes.
\end{theorem}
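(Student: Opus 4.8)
The plan is to handle both cases through a single template: first confirm that the chosen vectors meet the hypothesis of the general construction, then read off the parameters of $D_1$ and $D_2$, and finally establish that $D_2$ is non-cascading, which is where the real work lies. Throughout I would index the $s^u$ rows of $D_1$ and $\tilde D_2$ by the coefficient vectors ${\bf c}\in GF(s)^u$ used to form the linear combinations, so that the $i$th column of $D_1$ takes the value ${\bf c}^T{\bf z}_i$ at row ${\bf c}$, while ${\bf d}_j$ takes at row ${\bf c}$ the image of ${\bf c}^T G({\bf x}_j)$ under the method of replacement. For the general construction to apply I must check ${\bf z}_i\notin O({\bf x}_j)$ for all $i,j$. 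In case (i), ${\bf z}_i={\bf e}_i$ with $1\le i\le u_1$ and ${\bf x}_j\in\mathcal A$; by the definition (\ref{def:mathcal{A}}) every ${\bf x}\in\mathcal A$ lies outside $O({\bf e}_i)$ for $1\le i\le u_1$, i.e.\ ${\bf e}_i^T{\bf x}\neq 0$, which is precisely ${\bf z}_i\notin O({\bf x}_j)$. Case (ii) is the mirror image, with ${\bf z}_i\in\mathcal A$ and ${\bf x}_j={\bf e}_j$. With the hypothesis verified, Lemma \ref{lemma:basic-idea} gives that each pair $({\bf d}_j,{\bf a}_i)$ is an $\OA(s^u,s^{u-1}\times s,2)$, and then Lemma \ref{lem:D1-D2-condition} shows $(D_1,D_2)$ is a marginally coupled design in both cases.

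Next I would identify the parameters. In case (i) the generator columns are ${\bf e}_1,\ldots,{\bf e}_{u_1}$, which are linearly independent, so $D_1$ is the $u_1$-factor full factorial replicated $s^{u-u_1}$ times, namely $\OA(s^u,u_1,s,u_1)$; the number of columns of $D_2$ equals the number of chosen ${\bf x}_j$, which is $|\mathcal A|=n_A$ by Lemma \ref{lemma:general-union-O(e)}. In case (ii) the generator columns are the $n_A$ vectors of $\mathcal A$, any two of which are linearly independent because each has first entry $1$ and they are distinct, so no scalar relation can hold; hence $D_1$ has strength $2$, giving $\OA(s^u,n_A,s,2)$, while $D_2$ has $u_1$ columns. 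In both cases each ${\bf d}_j$ is balanced with each of its $s^{u-1}$ levels occurring $s$ times, so the level-replacement step produces a genuine $\LHD(s^u,\cdot)$.

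The substantive step is the non-cascading property, for which I would use the characterization recalled after Lemma \ref{lem:D1-D2-condition}: $D_2$ is non-cascading if and only if no two distinct columns of $\tilde D_2$ agree up to a level permutation, equivalently if and only if they induce distinct partitions of the rows. Since the method of replacement is injective, rows ${\bf c}$ and ${\bf c}'$ receive the same level under ${\bf d}_j$ exactly when $({\bf c}-{\bf c}')^T G({\bf x}_j)={\bf 0}$, i.e.\ when ${\bf c}-{\bf c}'$ is orthogonal to every column of $G({\bf x}_j)$ and hence to all of $O({\bf x}_j)$. The key identity, relying on the dot product being a nondegenerate form over $GF(s)$, is that the set of vectors orthogonal to the whole $(u-1)$-dimensional space $O({\bf x}_j)$ is exactly the line $\langle{\bf x}_j\rangle=\{\alpha{\bf x}_j:\alpha\in GF(s)\}$. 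Thus ${\bf d}_j$ partitions the rows into the cosets of $\langle{\bf x}_j\rangle$, and two columns ${\bf d}_i,{\bf d}_j$ induce the same partition if and only if $\langle{\bf x}_i\rangle=\langle{\bf x}_j\rangle$, i.e.\ if and only if ${\bf x}_i$ and ${\bf x}_j$ are proportional. In case (i) the ${\bf x}_j$ are distinct members of $\mathcal A$ with first entry $1$, and in case (ii) they are distinct standard basis vectors; in both cases no two are proportional, so all induced partitions are distinct and $D_2$ is non-cascading.

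I expect the main obstacle to be the clean justification of this coset description of the partition induced by ${\bf d}_j$, in particular the identity that the vectors orthogonal to all of $O({\bf x}_j)$ form precisely the line through ${\bf x}_j$, since this is what converts the combinatorial ``level permutation'' condition into the algebraic condition of proportionality of the generating vectors. The remaining verifications, namely the parameters and the linear independence ensuring strength, are routine once the indexing by ${\bf c}\in GF(s)^u$ is in place.
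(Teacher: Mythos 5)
Your proof is correct and follows essentially the same route as the paper: verify the hypothesis ${\bf z}_i\notin O({\bf x}_j)$ directly from the definition of $\mathcal{A}$ (and its symmetry in case (ii)), invoke Lemma \ref{lemma:basic-idea} and Lemma \ref{lem:D1-D2-condition} to get the marginally coupled property, and conclude non-cascading from the fact that no two chosen ${\bf x}$'s are proportional, so the subspaces $O({\bf x}_i)$ are distinct. The only difference is that you explicitly justify, via the coset partition and the identity $O({\bf x}_j)^\perp=\langle {\bf x}_j\rangle$, why distinct subspaces force columns that cannot be matched by level permutations --- a step the paper's proof asserts without detail --- so your write-up is a correct, more detailed version of the same argument.
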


The design $D_1$(or $D_2$) in Theorem \ref{thm-simple-construction} (i)  (or (ii)) can only accommodate $u_1\leq u$ columns.
A natural question is whether or not more columns in $D_1$ (or $D_2$) can be constructed. The answer is positive for $s=2$ as shown in He et al. (2017) by choosing some linear
combinations of $\{{\bf e}_1, \ldots, {\bf e}_{u_1}\}$ besides themselves for
${\bf z}_i$'s (or ${\bf x}_j$'s). For $s>2$, the answer is still positive, however, there is a price to pay. That is, when more columns of $D_1$ than those in Theorem~\ref{thm-simple-construction} are constructed using some linear combinations of
$\{{\bf e}_1, \ldots, {\bf e}_{u_1}\}$ in addition to themselves, the number of columns in $D_2$ will be less than that in Theorem~\ref{thm-simple-construction}. The reason for paying such cost is quantified in Proposition~\ref{thm-combination is impossible}.

\begin{proposition}\label{thm-combination is impossible}
For $s>2$ and the set $\{{\bf e}_1,\ldots, {\bf e}_{u_1}\}$ defined above,
let ${\bf z}=\sum_{i=1}^{u_1}\lambda_i {\bf e}_i$ with at least two nonzero coefficients,
where $\lambda_i\in GF(s)$. For such ${\bf z}$'s and $\mathcal{A}$ in (\ref{def:mathcal{A}}), there exists a column vector ${\bf x}\in \mathcal{A}$,
such that ${\bf z}\in O({\bf x})$.
\end{proposition}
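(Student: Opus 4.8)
The plan is to translate both the desired conclusion and the membership condition for $\mathcal{A}$ into explicit coordinate constraints on the entries of ${\bf x}$, and then to exhibit such an ${\bf x}$ by solving a single linear equation over $GF(s)$ while keeping the relevant coordinates nonzero. First I would rewrite the target relation: writing ${\bf x}=(x_1,\ldots,x_u)^T$ and recalling ${\bf z}=\sum_{i=1}^{u_1}\lambda_i{\bf e}_i$, the condition ${\bf z}\in O({\bf x})$ is by (\ref{def:O(x)}) exactly ${\bf z}^T{\bf x}=\sum_{i=1}^{u_1}\lambda_i x_i=0$ in $GF(s)$. Next I would unpack ${\bf x}\in\mathcal{A}$: since ${\bf x}\in O({\bf e}_i)$ holds if and only if $x_i=0$, the requirement ${\bf x}\notin\cup_{i=1}^{u_1}O({\bf e}_i)$ is equivalent to $x_i\neq 0$ for every $1\leq i\leq u_1$, while the defining condition on the first entry forces $x_1=1$; the tail entries $x_{u_1+1},\ldots,x_u$ are unconstrained and may be fixed arbitrarily. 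Thus the whole claim reduces to finding nonzero $x_2,\ldots,x_{u_1}\in GF(s)$ with $\sum_{i=2}^{u_1}\lambda_i x_i=-\lambda_1$.

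To solve this, I would set $J=\{\,i:2\leq i\leq u_1,\ \lambda_i\neq 0\,\}$ and put $x_i=1$ for every index in $\{2,\ldots,u_1\}\setminus J$, which keeps those entries nonzero and makes them drop out of the sum; it then remains to choose nonzero $\{x_i\}_{i\in J}$ satisfying $\sum_{i\in J}\lambda_i x_i=-\lambda_1$. Because ${\bf z}$ has at least two nonzero coefficients, a short check shows that either $\lambda_1\neq 0$ with $|J|=1$, or $|J|\geq 2$. In the first case $J=\{j\}$ and $x_j=-\lambda_1\lambda_j^{-1}$ is forced and automatically nonzero, so we are done. In the second case I would single out two indices $j_1,j_2\in J$, set $x_i=1$ for the remaining indices of $J$, and reduce to a single equation $\lambda_{j_1}x_{j_1}+\lambda_{j_2}x_{j_2}=c$ in two unknowns for a suitable constant $c\in GF(s)$.

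The main obstacle, and the only place where the hypothesis $s>2$ is essential, is guaranteeing that this two-variable equation admits a solution with both unknowns nonzero. I would argue by counting: as $x_{j_1}$ ranges over the $s-1$ nonzero elements of $GF(s)$, the value $x_{j_2}=\lambda_{j_2}^{-1}(c-\lambda_{j_1}x_{j_1})$ is determined and equals $0$ for at most one choice of $x_{j_1}$, leaving at least $s-2\geq 1$ admissible pairs. This is exactly where $s=2$ would break down, since there the only nonzero entry is $1$ and all freedom disappears, consistent with the contrast with He et al.\ (2017) drawn just before the proposition. Assembling the selected coordinates together with $x_1=1$ and arbitrary tail entries produces a vector ${\bf x}\in\mathcal{A}$ with ${\bf z}\in O({\bf x})$, which completes the argument.
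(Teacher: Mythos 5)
Your proposal is correct and takes essentially the same route as the paper: both reduce the claim to finding nonzero $x_2,\ldots,x_{u_1}\in GF(s)$ (with $x_1=1$ and the tail entries arbitrary) solving $\sum_{i=1}^{u_1}\lambda_i x_i=0$, set the coordinates outside the support of ${\bf z}$ equal to $1$, and exploit $s>2$ to keep the solved-for coordinates nonzero. The only divergence is in how the key step is finished: where you use a pigeonhole count (at most one of the $s-1\geq 2$ nonzero choices of $x_{j_1}$ forces $x_{j_2}=0$), the paper splits on whether the partial sum $\lambda^*=\sum_{j=1}^{l-1}\lambda_{i_j}$ vanishes and, when it does, explicitly perturbs one coordinate to $\alpha_2\neq 1$ before solving for the last coordinate --- both arguments correctly isolate the exact point where $s>2$ is indispensable.
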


Proposition \ref{thm-combination is impossible} shows that, when $s>2$,
except $\{\alpha{\bf e}_i \mid \alpha\in GF(s)\setminus\{0\}, i=1, \ldots,  u_1\}$,
for any of their other combinations, say ${\bf z}$, it is impossible that ${\bf z}$ is not in $O({\bf x})$ for all ${\bf x}\in \mathcal{A}$. This means if adding ${\bf z}$ for constructing one more column for $D_1$, not all the columns in $\mathcal{A}$ can be used for   constructing columns for  $D_2$.  As a compromise, after adding more combinations of $\{{\bf e}_1, \ldots, {\bf e}_{u_1}\}$ for $D_1$,
we use a subset $\{{\bf x}_1, \ldots, {\bf x}_k\}\subset\mathcal{A}$ to construct $(u-1)$-dimensional subspaces $\{O({\bf x}_1), \ldots, O({\bf x}_k)\}$, where $k<n_A$. Next section discusses an approach to find such a subset.

\subsection{Subspace construction}

This subsection introduces an approach to find a proper subset $\{{\bf x}_1, \ldots, {\bf x}_k\}\subset\mathcal{A}$
and judiciously select some linear combinations
${\bf z}=\lambda_1{\bf e}_1+ \cdots + \lambda_{u_1}{\bf e}_{u_1}$, with $\lambda_j\in GF(s)$,
such that ${\bf z}\in S_u\setminus (\cup_{i=1}^kO({\bf x}_i))$.

One building block of the proposed approach is some disjoint groups of
$\mathcal{A}$. To partition $\mathcal{A}$ into different groups, note that for $1\leq j\leq u_1$, the last $u-u_1$ entries of ${\bf e}_j$
are zeros and  thus the first  $u_1$ entries of ${\bf z}$ and ${\bf x}_i$  determine whether or not ${\bf z}$ is orthogonal to ${\bf x}_i$.
In light of this observation, the partition of $\mathcal{A}$ is based on  the distinct values of the first $u_1$ entries of vectors in $\mathcal{A}$.   The proof of Lemma \ref{lemma:general-union-O(e)} reveals that the first $u_1$ entries of
${\bf x}\in \mathcal{A}$ can take $n_B=(s-1)^{u_1-1}$  distinct values, say $\{(1, b_{i2}, \ldots, b_{iu_1})\mid i=1, \ldots, n_B\}$. Let ${\bf b}_i=(1, b_{i2}, \ldots, b_{iu_1}, 0,\ldots, 0)^T$, and
define $\mathcal{A}_i$ to be the subset of $\mathcal{A}$ whose column vectors
have the same first $u_1$ entries  as those of ${\bf b}_i$. It shall be noted that $|\mathcal{A}_i|=s^{u-u_1}$ and  $\mathcal{A}_i$'s form a disjoint partition of  $\mathcal{A}$. That is,
\begin{equation}\nonumber
\mathcal{A}=\cup_{i=1}^{n_B}\mathcal{A}_i.
\end{equation}

The other building block  is a set of $\overline E_i$'s defined as follows.
Let $E=\{ \sum_{j=1}^{u_1}\lambda_j {\bf e}_j \mid \lambda_j\in GF(s)\}$
consist of all linear combinations of ${\bf e}_1, \ldots, {\bf e}_{u_1}$.
For fixed $i$, ${\bf b}_i$ and $\mathcal{A}_i$, $1\leq i\leq n_B$, define
\begin{eqnarray}\nonumber
E_i=\{\ {\bf z}\in E \mid {\bf z}^T{\bf b}_i=0 \ \} \ \mbox{and} \ \overline E_i=E\setminus E_i.
\end{eqnarray}
If ${\bf z}\in \overline E_i$, then ${\bf z}\notin O({\bf b}_i)$,
which implies ${\bf z}\notin O({\bf x})$ for all ${\bf x}\in \mathcal{A}_i$ since
the last $u-u_1$ entries of ${\bf z}$ are zeros. This leads to
Lemma \ref{lemma:Ei-and-Mi}.
\begin{lem}\label{lemma:Ei-and-Mi}
For $1\leq v\leq n_B$, any ${\bf z}\in \cap_{i=1}^v \overline E_{i}$ and
any ${\bf x}\in \cup_{i=1}^v \mathcal{A}_{i}$, we have ${\bf z}\notin O({\bf x})$.
\end{lem}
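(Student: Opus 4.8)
The plan is to reduce the union/intersection statement to a single block and then to observe that the inner product ${\bf z}^T{\bf x}$ collapses onto the first $u_1$ coordinates, where it agrees with ${\bf z}^T{\bf b}_i$. This is precisely the formalization of the remark preceding the lemma, so the argument should be short.

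First I would fix arbitrary vectors ${\bf z}\in \cap_{i=1}^v \overline E_{i}$ and ${\bf x}\in \cup_{i=1}^v \mathcal{A}_{i}$. Since ${\bf x}$ lies in the union, there is a single index $i_0\in\{1,\ldots,v\}$ with ${\bf x}\in \mathcal{A}_{i_0}$; and since ${\bf z}$ lies in the intersection, in particular ${\bf z}\in \overline E_{i_0}$. This localizes the claim to the single pair $(\mathcal{A}_{i_0},\overline E_{i_0})$, so it suffices to prove that ${\bf z}^T{\bf x}\neq 0$ whenever ${\bf z}\in \overline E_{i_0}$ and ${\bf x}\in\mathcal{A}_{i_0}$.

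The crux is the inner-product computation. Because ${\bf z}\in E$, its last $u-u_1$ entries are zero, so ${\bf z}^T{\bf x}$ depends only on the first $u_1$ entries of ${\bf x}$. By the definition of $\mathcal{A}_{i_0}$, the first $u_1$ entries of ${\bf x}$ coincide with those of ${\bf b}_{i_0}$, whose last $u-u_1$ entries are also zero. Hence ${\bf z}^T{\bf x}={\bf z}^T{\bf b}_{i_0}$. Finally, ${\bf z}\in\overline E_{i_0}=E\setminus E_{i_0}$ means exactly that ${\bf z}^T{\bf b}_{i_0}\neq 0$, and therefore ${\bf z}^T{\bf x}\neq 0$, i.e. ${\bf z}\notin O({\bf x})$, as required.

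I do not anticipate a genuine obstacle here. The only step needing care is the reduction ${\bf z}^T{\bf x}={\bf z}^T{\bf b}_{i_0}$, where one must confirm that the inner product truly ignores the last $u-u_1$ coordinates of ${\bf x}$; this rests entirely on ${\bf z}$ having zero entries in those positions, a direct consequence of ${\bf z}\in E$. Everything else is bookkeeping over the disjoint partition $\mathcal{A}=\cup_{i=1}^{n_B}\mathcal{A}_i$.
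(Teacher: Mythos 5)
Your proof is correct and follows the same route as the paper: the paper establishes the lemma by the inline observation that ${\bf z}\in \overline E_i$ gives ${\bf z}\notin O({\bf b}_i)$, which transfers to all ${\bf x}\in\mathcal{A}_i$ because the last $u-u_1$ entries of ${\bf z}$ vanish, so ${\bf z}^T{\bf x}={\bf z}^T{\bf b}_i\neq 0$. Your write-up simply makes the localization to a single index $i_0$ explicit; there is no substantive difference.
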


Lemma \ref{lemma:Ei-and-Mi} is useful because it provides
$\{{\bf z}_i\}$'s and $\{{\bf x}_j\}$'s required by the general construction in Section 3.1. That is, one can choose ${\bf z}_i$ from $\cap_{i=1}^v \overline E_{i}$,
and ${\bf x}_j$ from $\cup_{i=1}^v \mathcal{A}_{i}$, that is exactly the
method Theorem \ref{theorem:subspace-construction} adopts.

So far, it remains to resolve the question that what the elements are in
$\cap_{i=1}^{v} \overline E_{i}$ for $1\leq v \leq n_B$.
The answer is not difficult for $v=1$, and that
for $v=n_B$ can be found in Proposition~\ref{prop-the-bound-for-intersectionset}
in Appendix for interested readers.
For $1< v< n_B$, the explicit form for elements in $\cap_{i=1}^{v}\overline E_{i}$
depends on the specific sets $\overline E_{1}, \ldots, \overline E_{v}$.   Thus, we cannot express the elements
in $\cap_{i=1}^{v}\overline E_{i}$  using a general form. However, we are able to compute the number of elements in $\cap_{i=1}^{v}\overline E_{i}$
  for some cases.
Theorem \ref{theorem:subspace-construction}
 shows that this number is closely related to the number of variables in the marginally
coupled design. In practice, experimenters also hope to
know the number in advance, as it can help them determine which marginally coupled design to choose given the numbers of qualitative and quantitative variables in the experiment.
Proposition \ref{prop:s-level} below provides the number,  {\bf $|\cap_{i=1}^{v}\overline E_{i}|$}, in some circumstances.

\begin{proposition}\label{prop:s-level}
For $\{{\bf b}_1, \ldots, {\bf b}_{n_B}\}$ defined above,
suppose that there exists a subset $\{{\bf b}_{i_1}, \ldots, {\bf b}_{i_{n^*}}\}$ such that any $u_1$ elements of the set are independent, for $n^*\leq n_B$. We have that
for $1\leq v\leq n^*$ and $1\leq i_1<i_2\ldots <i_v\leq n_B$, the set $\cap_{j=1}^v \overline E_{i_j}$
contains $f(v)$ elements with \begin{eqnarray}\label{eq:f(v)}
                 f(v)=
                  \begin{cases}
                   (s-1)^{v}s^{u_1-v}, &  1\leq v\leq u_1, \\
                   m^*,            &  u_1+1\leq v\leq n^*,
                   \end{cases}
\end{eqnarray}
where $m^*=s^{u_1}[1-{v\choose 1}s^{-1} + \cdots +(-1)^{u_1}{v\choose u_1}s^{-u_1}] + \sum_{i=u_1+1}^v(-1)^i{v\choose i}$.
\end{proposition}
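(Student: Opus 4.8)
The plan is to compute $|\cap_{j=1}^v \overline{E}_{i_j}|$ by inclusion--exclusion over the sets $E_{i_j}$ and to evaluate each resulting term by elementary linear algebra over $GF(s)$. First I would record the two structural facts that make the count tractable. The set $E$ is in bijection with $GF(s)^{u_1}$, since each ${\bf z}\in E$ is determined by its first $u_1$ coordinates, the remaining $u-u_1$ being zero, so $|E|=s^{u_1}$. Moreover, because the last $u-u_1$ entries of both ${\bf z}\in E$ and each ${\bf b}_i$ vanish, the condition ${\bf z}^T{\bf b}_i=0$ depends only on the first $u_1$ coordinates; hence $E_i$ is exactly the kernel, inside the $u_1$-dimensional space $E$, of the linear functional attached to ${\bf b}_i$, a subspace of dimension $u_1-1$.

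Next, since $\cap_{j=1}^v \overline{E}_{i_j} = E\setminus \cup_{j=1}^v E_{i_j}$, inclusion--exclusion gives
\[
\Big| \cap_{j=1}^v \overline{E}_{i_j} \Big| = \sum_{T \subseteq \{1,\ldots,v\}} (-1)^{|T|} \Big| \cap_{j \in T} E_{i_j} \Big|,
\]
where the $T=\emptyset$ term contributes $|E|=s^{u_1}$. The set $\cap_{j\in T} E_{i_j}$ is the common solution space in $E$ of the $|T|$ linear equations ${\bf z}^T{\bf b}_{i_j}=0$, $j\in T$, so its cardinality equals $s^{u_1-\rho(T)}$, where $\rho(T)$ is the rank over $GF(s)$ of $\{{\bf b}_{i_j}\mid j\in T\}$ in the first $u_1$ coordinates.

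The crucial input is the general-position hypothesis that any $u_1$ of the chosen ${\bf b}_{i_j}$ are independent. I would use this to pin down $\rho(T)$ exactly: for $|T|\le u_1$ a subcollection can be completed to an independent $u_1$-subset and is therefore itself independent, so $\rho(T)=|T|$; for $|T|>u_1$ the set already contains $u_1$ independent vectors that span, so $\rho(T)=u_1$. Consequently $|\cap_{j\in T}E_{i_j}|=s^{u_1-|T|}$ when $|T|\le u_1$ and $=1$ when $|T|>u_1$. Grouping the inclusion--exclusion sum by $r=|T|$, with $\binom{v}{r}$ subsets of each size, yields
\[
\Big| \cap_{j=1}^v \overline{E}_{i_j} \Big| = \sum_{r=0}^{\min(v,u_1)} (-1)^r \binom{v}{r} s^{u_1 - r} + \sum_{r = u_1 + 1}^{v} (-1)^r \binom{v}{r}.
\]

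Finally I would split into the two regimes. For $1\le v\le u_1$ the second sum is empty and the first runs to $r=v$; factoring out $s^{u_1-v}$ and invoking the binomial identity $\sum_{r=0}^v\binom{v}{r}(-1)^r s^{v-r}=(s-1)^v$ gives $f(v)=(s-1)^v s^{u_1-v}$. For $u_1+1\le v\le n^*$ both sums survive, the first equalling $s^{u_1}[1-\binom{v}{1}s^{-1}+\cdots+(-1)^{u_1}\binom{v}{u_1}s^{-u_1}]$ and the second $\sum_{i=u_1+1}^v(-1)^i\binom{v}{i}$, which is precisely $m^*$. The main obstacle is the middle step: everything rests on converting the combinatorial ``any $u_1$ independent'' hypothesis into the exact value of $\rho(T)$ for every $T$, together with the observation that restricting to the first $u_1$ coordinates is legitimate; once those rank values are established, the rest is a routine binomial simplification.
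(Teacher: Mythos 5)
Your proposal is correct and follows essentially the same route as the paper: both compute $|\cap_{j=1}^v \overline{E}_{i_j}| = |E \setminus \cup_{j=1}^v E_{i_j}|$ by inclusion--exclusion, using the independence hypothesis to conclude that any $r \le u_1$ of the $E_{i_j}$'s intersect in $s^{u_1-r}$ points while any more than $u_1$ of them intersect only in the zero vector, and then simplify via the binomial identity. Your writeup is merely a more unified presentation (explicit rank function $\rho(T)$ versus the paper's warm-up cases $v=1,2$ followed by the general case), but the underlying argument is identical.
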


The value of $n^*$  in Proposition \ref{prop:s-level} will be studied in Section 3.3. Example \ref{ex:1} provides an illustration of the ${\bf b}_i$'s, $\mathcal{A}_i$'s, $\overline E_i$'s and Proposition
\ref{prop:s-level}.

\begin{example}\label{ex:1}
Consider $s=3$, $u=4$ and $u_1=3$. By definition, we have ${\bf e}_1=(1,0,0,0)^T$, ${\bf e}_2=(0,1,0,0)^T$ and ${\bf e}_3=(0,0,1,0)^T$,  $\mathcal{A}=\{(x_1, x_2, x_3, x_4)^T \ | \ x_1=1, x_2, x_3\in \{1, 2\}, x_4\in \{0,1,2\}\}$,
 $n_B=(3-1)^{3-1}=4$, ${\bf b}_1=(1,1,1,0)^T$, ${\bf b}_2=(1, 1, 2,0)^T$, ${\bf b}_3=(1, 2, 1,0)^T$, and ${\bf b}_4=(1,2,2,0)^T$. The disjoint groups $\mathcal{A}_1, \ldots, \mathcal{A}_4$ are displayed in Table \ref{tb0}. Note that any three of $\{{\bf b}_1, {\bf b}_2, {\bf b}_3, {\bf b}_4\}$
are independent. According to (\ref{eq:f(v)}), we have $f(1)=18$, $f(2)=12$,
$f(3)=8$ and $f(4)=6$. That is, each of $\overline E_i$'s has $18$ vectors, as shown in Table \ref{tb1};
the intersection of any two of $\overline E_i$'s has
$12$ vectors, the intersection of any three of $\overline E_i$'s has $8$
vectors, and the intersection of four of them has $6$ vectors.

\begin{table}[h]
{\tabcolsep=6pt
\renewcommand{\arraystretch}{1}
 \begin{center}
 \caption{Partition of $\mathcal{A}$ in Example~\ref{ex:1}\label{tb0}}
 \scalebox{0.8}{
\begin{tabular}{ccc c  ccc c   ccc c  ccc} \hline
 \multicolumn{3}{c}{$\mathcal{A}_1$ } & \multicolumn{1}{c}{} &\multicolumn{3}{c}{$\mathcal{A}_2$} & \multicolumn{1}{c}{}&\multicolumn{3}{c}{$\mathcal{A}_3$ } & \multicolumn{1}{c}{}& \multicolumn{3}{c}{$\mathcal{A}_4$ } \\\hline
 1 &  1   &  1&& 1 & 1& 1&& 1 & 1& 1&& 1 & 1 & 1 \\[-5pt]
 1 &  1   &  1&& 1 & 1& 1&& 2 & 2& 2&& 2 & 2 & 2 \\[-5pt]
 1 &  1   &  1&& 2 & 2& 2&& 1 & 1& 1&& 2 & 2 & 2 \\[-5pt]
 0 &  1   &  2&& 0 & 1& 2&& 0 & 1& 2&& 0 & 1 & 2 \\\hline
\end{tabular}}
\end{center}}
 \end{table}
 \begin{table}[h]
{\tabcolsep=10pt
\renewcommand{\arraystretch}{1}
 \begin{center}
 \caption{Vectors of $\overline E_i$'s in Example~\ref{ex:1}\label{tb1}}
 \scalebox{0.8}{
  \begin{tabular}{   ccc ccc ccc  ccc ccc ccc} 
              \multicolumn{18}{c}{$\overline E_1$}\\ \hline
                     0&0 &0 &1 &1 &1 &1 &1 &1 & 0 &0 &0 &2 &2 &2 &2 &2 &2\\[-5pt]
                     0&1 &1 &0 &0 &1 &2 &1 &2 & 0 &2 &2 &0 &0 &2 &1 &2 &1\\[-5pt]
                     1&0 &1 &0 &1 &0 &2 &2 &1 & 2 &0 &2 &0 &2 &0 &1 &1 &2\\[-5pt]
                     0&0 &0 &0 &0 &0 &0 &0 &0 & 0 &0 &0 &0 &0 &0 &0 &0 &0\\\hline
                   \multicolumn{18}{c}{$\overline E_2$}\\ \hline
                    0 &0  &0  &1  &1  &1  &1  &1  &1 &0 &0 &0 &2 &2 &2 &2 &2 &2\\[-5pt]
                    0 &1  &1  &0  &0  &1  &1  &2  &2 &0 &2 &2 &0 &0 &2 &2 &1 &1\\[-5pt]
                    1 &0  &2  &0  &2  &0  &1  &2  &1 &2 &0 &1 &0 &1 &0 &2 &1 &2\\[-5pt]
                    0 &0  &0  &0  &0  &0  &0  &0  &0 &0 &0 &0 &0 &0 &0 &0 &0 &0\\ \hline
                  \multicolumn{18}{c}{$\overline E_3$}\\ \hline
                    0 & 0 & 0& 1& 1 &1  &1  &1  &1 &0&0&0&2&2&2&2&2&2\\[-5pt]
                    0 & 1 & 1& 0& 0 &2  &1  &2  &1 &0&2&2&0&0&1&2&1&2\\[-5pt]
                    1 & 0 & 2& 0& 1 &0  &1  &2  &2 &2&0&1&0&2&0&2&1&1\\[-5pt]
                    0 & 0 & 0& 0& 0 &0  &0  &0  &0 &0&0&0&0&0&0&0&0&0\\ \hline
                  \multicolumn{18}{c}{$\overline E_4$}\\ \hline
                   0  & 0 & 0 & 1&1 &1  &1 &1 &1 &0&0&0&2&2&2&2&2&2\\[-5pt]
                   0  & 1 & 1 & 0&0 &2  &1 &1 &2 &0&2&2&0&0&1&2&2&1\\[-5pt]
                   1  & 0 & 1 & 0&2 &0  &1 &2 &1 &2&0&2&0&1&0&2&1&2\\[-5pt]
                   0  & 0 & 0 & 0&0 &0  &0 &0 &0 &0&0&0&0&0&0&0&0&0\\ \hline
\end{tabular}}
\end{center}}

\end{table}
\end{example}

Next, we show how to use ${\bf b}_i$, $\mathcal{A}_i$ and $\overline E_i$ ($i=1,\ldots,n_B$) to construct marginally coupled designs. To do so, we define $E_v^*$, $\mathcal{A}_v^*$ and $g(v)$ as follows. To define  $E_v^*$,
given $s$, $u$ and $u_1$,  find a set of
$\{{\bf b}_{i_1}, \ldots, {\bf b}_{i_{n^*}}\}$, by calculation or computer search, such that any $u_1$ elements in the set are independent;  for $1\leq v\leq n^*$, obtain
$\cap_{j=1}^v\overline E_{i_j}$ which
has $f(v)$ elements as shown in Proposition \ref{prop:s-level}. Define $E_v^*$ to be the subset of
$\cap_{j=1}^v\overline E_{i_j}$ in which the first nonzero entry of each element is equal to 1.
The value $g(v)=f(v)/(s-1)$ is the number of elements of $E_v^*$. Define $\mathcal{A}_v^*=\cup_{j=1}^v\mathcal{A}_{i_j}$.

\begin{theorem}\label{theorem:subspace-construction}
 For $E_v^*$, $\mathcal{A}_v^*$ and $g(v)$ defined above, if in the general construction, we
\begin{itemize}
  \item[(i)] choose ${\bf z}_i \in   E_v^*$
             and ${\bf x}_j \in \mathcal{A}_v^*$, $i=1,\ldots,g(v)$ and $j=1,\ldots,vs^{u-u_1}$, an $\text{MCD}(D_1, D_2)$ with $\ D_1=\OA(s^u, g(v), s, 2), D_2=\LHD(s^u, vs^{u-u_1})$
             can be obtained, or
  \item[(ii)] choose ${\bf z}_i \in \mathcal{A}_v^*$
              and ${\bf x}_j \in E_v^*$, $i =1,\ldots, vs^{u-u_1}$ and $j =1, \ldots, g(v)$, an $\text{MCD}(D_1, D_2)$ with $\ D_1=\OA(s^u, vs^{u-u_1}, s, 2), D_2=\LHD(s^u, g(v))$ can be obtained,
\end{itemize}
where both $D_2$'s are non-cascading Latin hypercubes.
\end{theorem}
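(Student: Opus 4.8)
The plan is to reduce both parts to the general construction of Section 3.1, whose only requirements are that the chosen generators ${\bf z}_i$ be pairwise non-proportional (so that $D_1$ has strength $2$), that the chosen ${\bf x}_j$ be pairwise non-proportional (so that the columns of $\tilde D_2$ are distinct), and that ${\bf z}_i \notin O({\bf x}_j)$ for every $i,j$ (so that each $({\bf d}_j,{\bf a}_i)$ is an $\OA(s^u, s^{u-1}\times s,2)$ by Lemma \ref{lemma:basic-idea}). Once these three conditions are checked, Lemma \ref{lem:D1-D2-condition} certifies that $(D_1,D_2)$ is a marginally coupled design, and it remains only to read off the parameters and verify the non-cascading property.

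First I would record the non-proportionality facts. Every vector in $\mathcal{A}$, hence in $\mathcal{A}_v^*=\cup_{j=1}^v \mathcal{A}_{i_j}$, has first entry $1$, and every vector in $E_v^*$ has first nonzero entry $1$; in either family two distinct vectors cannot be scalar multiples of each other, so each family is pairwise independent. This yields strength $2$ for $D_1$ and distinct columns for $\tilde D_2$ in both (i) and (ii). The same facts settle the non-cascading claim: a computation in the spirit of Section 2.2 shows that the column ${\bf d}_j$ partitions the $s^u$ runs into the cosets of $\mathrm{span}({\bf x}_j)$, whence two columns of $\tilde D_2$ are interchangeable by a level permutation exactly when the associated vectors are proportional. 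Since ours are not, $D_2$ is non-cascading.

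Next I would verify ${\bf z}_i \notin O({\bf x}_j)$. In case (i) the generators lie in $E_v^*\subset \cap_{j=1}^v\overline E_{i_j}$ and the subspace vectors lie in $\mathcal{A}_v^*=\cup_{j=1}^v\mathcal{A}_{i_j}$, so Lemma \ref{lemma:Ei-and-Mi} (whose argument is insensitive to the labeling of the groups, hence applies to the index set $\{i_1,\ldots,i_v\}$) gives the conclusion at once. Case (ii) is the one genuine subtlety, because the two families trade places and Lemma \ref{lemma:Ei-and-Mi} no longer applies verbatim. The remedy is that $O(\cdot)$ is built from the symmetric form: ${\bf z}\notin O({\bf x})\iff {\bf z}^T{\bf x}\neq 0\iff {\bf x}^T{\bf z}\neq 0\iff {\bf x}\notin O({\bf z})$. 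Applying Lemma \ref{lemma:Ei-and-Mi} to the pair consisting of ${\bf x}_j\in \cap_{j=1}^v\overline E_{i_j}$ and ${\bf z}_i\in \cup_{j=1}^v\mathcal{A}_{i_j}$ gives ${\bf x}_j\notin O({\bf z}_i)$, which by symmetry is exactly ${\bf z}_i\notin O({\bf x}_j)$.

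It remains to count. The set $\cap_{j=1}^v\overline E_{i_j}$ is closed under multiplication by nonzero scalars and excludes $0$, so its $f(v)$ elements split into classes of size $s-1$, each with a unique representative whose first nonzero entry is $1$; hence $|E_v^*|=f(v)/(s-1)=g(v)$. The groups $\mathcal{A}_{i_j}$ are disjoint with $|\mathcal{A}_{i_j}|=s^{u-u_1}$, so $|\mathcal{A}_v^*|=vs^{u-u_1}$. Feeding the values of $m$ and $k$ from these two counts into the general construction yields the stated designs in (i) and (ii). The only nontrivial step is the symmetry argument for case (ii); everything else is a direct appeal to Lemma \ref{lemma:Ei-and-Mi} and to the structural description of the columns in Section 2.2.
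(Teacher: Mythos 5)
Your proposal is correct and follows essentially the same route as the paper's proof: Lemma \ref{lemma:Ei-and-Mi} supplies ${\bf z}\notin O({\bf x})$, Lemmas \ref{lemma:basic-idea} and \ref{lem:D1-D2-condition} then certify the marginally coupled design property, and the distinctness of the subspaces $O({\bf x}_i)$ gives the non-cascading claim. The only difference is one of explicitness: you spell out the symmetry ${\bf z}^T{\bf x}\neq 0 \iff {\bf x}^T{\bf z}\neq 0$ needed for item $(ii)$ and the coset-partition justification of non-cascading, both of which the paper leaves implicit (the symmetry point appears only in its proof of Theorem \ref{thm-simple-construction}).
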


For ease of the presentation, the method in Theorem
\ref{theorem:subspace-construction} is called {\em subspace construction}.
Example~\ref{ex:2} provides a detailed illustration of obtaining marginally coupled designs via
the subspace construction using the  $\mathcal{A}_i$'s and $\overline E_{i}$'s
in Example \ref{ex:1}.

\begin{example}\label{ex:2}
(Continuation of  Example \ref{ex:1})
Table \ref{tb:thm3andthm4} presents  ${\bf \MCD}(D_1, D_2)$'s
obtained according to the subspace construction method by
choosing $v=1,2,3$ or $4$.
As an illustration, we provide the detailed steps of applying item $(i)$ of Theorem \ref{theorem:subspace-construction} for $v=3$.
Consider the sets $\cap_{j=1}^3 \overline E_j$ and $\cup_{j=1}^3 \mathcal{A}_{j}$.
In {Step 1}, $f(3)=8$, hence   $g(3)=4$. The four elements in $\cap_{j=1}^3 \overline E_j$ with the first nonzero entry being 1
are ${\bf z}_1=(0,0,1,0)^T, {\bf z}_2=(0,1,0,0)^T, {\bf z}_3=(1,0,0,0)^T$, and ${\bf z}_4=(1,2,2,0)^T$;
take $({\bf z}_1, {\bf z}_2, {\bf z}_3, {\bf z}_4)$ as a generator matrix to obtain
$D_1=({\bf a}_1, {\bf a}_2, {\bf a}_3, {\bf a}_4)$, an $\OA(81,4,3,2)$.
In {Step 2}, the $3\cdot 3^{4-3}=9$ elements in
$\cup_{j=1}^3 \mathcal{A}_{j}=\{{\bf x}_1, {\bf x}_2, \ldots, {\bf x}_9\}$ are shown in Table \ref{tb0}.
For each ${\bf x}_i$, let $G({\bf x}_i)$ consist of three independent columns of $O({\bf x}_i)$,
and take  $G({\bf x}_i)$ as a generator matrix to obtain the matrix $A_i$, an $\OA(81, 3, 3, 3)$;
let ${\bf d}_i=A_i\cdot(3^2, 3, 1)^T$, and further let $\tilde D_2=({\bf d}_1, \ldots, {\bf d}_9)$,  an $\OA(81, 9, 27, 1)$.
In {Step 3}, construct $D_2$, an $\LHD(81, 9)$, from $\tilde D_2$ by the {\it level-replacement based Latin hypercube} approach. The above three-step procedure results in an $\MCD(D_1,D_2)$,
  which is listed  in Table \ref{tb:thm3andthm4} marked by $\#$, and in the middle of  Table \ref{tb-designs} marked by $\diamondsuit$.

\begin{table}[h]
{\tabcolsep=6pt
\renewcommand{\arraystretch}{1}
 \begin{center}
 \caption{$MCD(D_1,D_2)$'s with $s=3$, $u=4$ and $u_1=3$ in Example~\ref{ex:2}}\label{tb:thm3andthm4}
 \scalebox{0.8}{
\begin{tabular}{c cc cc} \hline
   &  \multicolumn{2}{c}{By item $(i)$}   & \multicolumn{2}{c}{By item $(ii)$} \\
   \hline
            $v$              & ${D}_1$ &  ${D}_2$           & ${D}_1$ &  ${D}_2$    \\\hline
 $1$   &  $ \OA(3^4, 9, 3, 2)$& $ \LHD(3^4, 3)$  & $ \OA(3^4, 3, 3, 2)$&$ \LHD(3^4, 9)$ \\[-5pt]
 $2$   &  $ \OA(3^4, 6, 3, 2)$& $ \LHD(3^4, 6)$  & $ \OA(3^4, 6, 3, 2)$&$ \LHD(3^4, 6)$ \\[-5pt]
 \hspace{-3mm}$^{\#}3$ &  $ \OA(3^4, 4, 3, 2)$& $ \LHD(3^4, 9)$  & $ \OA(3^4, 9, 3, 2)$&$ \LHD(3^4, 4)$ \\[-5pt]
 $4$   &  $ \OA(3^4, 3, 3, 2)$& $ \LHD(3^4, 12)$ & $ \OA(3^4, 12, 3, 2)$&$ \LHD(3^4, 3)$ \\ \hline
\end{tabular}}
\end{center}}
 \end{table}
\end{example}

\subsection{The maximum value of $n^*$}

Both Proposition \ref{prop:s-level} and Theorem \ref{theorem:subspace-construction} require a set of vectors $\{{\bf b}_{i_1}, \ldots, {\bf b}_{i_{n^*}}\}$ in which
any $u_1$ elements are independent.
The value of $n^*$ directly determines the number of columns in $D_1$ or $D_2$.
Of theoretical interest is the maximum value of $n^*$ that can be achieved,
and the bound of such  a value if not obtained explicitly. We provide the maximum value of $n^*$ for the three cases: (1) $s=2$ with $u_1\geq 2$, (2) $s>2$ with $u_1=1$, and
(3) $s>2$ with $u_2=2$. For other values of $s$, $u$, and $u_1$,
we provide bounds of the maximum value of $n^*$.

\noindent{\bf Case 1:} $s=2$, $u_1\geq 2$

For $s=2$, and  $1\leq u_1<u$, we have $n_B=(s-1)^{u_1-1}=1$ and thus $n^*=1$. The only choice
for ${\bf b}_i$'s, $\mathcal{A}_i$'s and $\overline E_i$'s is ${\bf b}_1=(1,\ldots,1, 0, \ldots, 0)$,
$\mathcal{A}=\mathcal{A}_1=\{(1, \ldots, 1, x_{u_1+1}, \ldots, x_{u})\mid x_i \in \{0, 1\}\}$,
and $\overline E_1$ contains all the combinations of $\lambda_1{\bf e}_1 +{\cdots} + \lambda_{u_1}{\bf e}_{u_1}$ that are
not orthogonal to column vectors of $\mathcal{A}_1$. Note that $\overline E_1$   consists of all combinations with odd numbers
of $\{{\bf e}_1, \ldots, {\bf e}_{u_1}\}$. Therefore, $\overline E_1$ has $2^{u_1-1}$ elements. In addition, $v=1$, $f(1)=g(1)=2^{u_1-1}$ and $k=1\cdot2^{u-u_1}$.

\noindent{\bf Case 2:} $s\geq 3$, $u_1=1$

As $u_1=1$, we have $n_B = (s-1)^{u_1-1} = 1$ and $n^*=1$.
It is clear that $\mathcal{A}=\mathcal{A}_1$, $\overline E_1=\{\alpha{\bf e}_1\mid \alpha\in GF(s)\setminus\{0\}\}$,
$v=1$, $f(1)=s-1$, $g(1)=1$ and $k=1\cdot s^{u-1}$.

\vspace{3mm}
\noindent{\bf Case 3:} $s\geq 3$, $u_1=2$

We have $n_B=(s-1)^{u_1-1}=s-1$. The first $u_1$ entries of vectors of $\mathcal{A}$ have
$s-1$ choices as $(1,\alpha_1)^T, (1, \alpha_2)^T, \ldots, (1, \alpha_{s-1})^T$ for $\alpha_i \in GF(s)$,
hence ${\bf b}_i=(1, \alpha_i, 0, \ldots, 0)^T$. As any two vectors of
$\{{\bf b}_1, {\bf b}_2, \ldots, {\bf b}_{s-1}\}$ are independent, the maximum value
of $n^*$ is $s-1$. The values of $f(v)$ at $v=1, 2$, and $2< v\leq s-1$ are $s(s-1)$, $(s-1)^2$ and $(s-1)(s-v+1)$
according to (\ref{eq:f(v)}), respectively.  The values of $g(v)$ at $v=1, 2$, and $2< v\leq s-1$ are $s$, $s-1$ and $s-v+1$, respectively.

Table \ref{n-star} summarizes the maximum values of $n^*$ under cases 1 to 3, where
the marginally coupled designs are obtained as in Theorem \ref{theorem:subspace-construction}.
For $s=2$, $D_1$ is an orthogonal array of strength three follows by Corollary 2
of Deng, Hung and Lin (2015). For $s, u_1>2$, Proposition \ref{prop:upperbound}
presents a bound for the maximum value of $n^*$.

\begin{table}
{\tabcolsep=6pt
\renewcommand{\arraystretch}{0.8}
 \begin{center}
 \caption{Maximum values of $n^*$ and $MCD(D_1,D_2)$'s for $s=2$ or $u_1\leq 2$\label{n-star}}
 \scalebox{0.7}{
\begin{tabular}{ccccc ll} \hline
 $s$                                 & $u_1$                               &  maximum value of $n^*$ &$v$ & $g(v)$  &\multicolumn{1}{c}{${D}_1$}         & \multicolumn{1}{c}{${D}_2$}\\\hline
  \multirow{2}{2cm}{$s=2$}           & \multirow{2}{2cm}{$2\leq u_1\leq u$}&   \multirow{2}{2cm}{$1$}& 1  &$2^{u_1-1}$& $\OA(2^u, 2^{u_1-1}, 2, 3)$        & $\LHD(2^u, 2^{u-u_1})$ \\
                                     &                                     &                         & 1  &$2^{u_1-1}$& $\OA(2^u, 2^{u-u_1}, 2, 3)$        & $\LHD(2^u, 2^{u_1-1})$ \\\hline
  \multirow{2}{2cm}{$s\geq 3$}       & \multirow{2}{2cm}{$1$}              & \multirow{2}{2cm}{$1$}  & 1  &    1      & $\OA(s^u, 1, s, 2)$        & $\LHD(s^u, s^{u-1})$           \\
                                     &                                     &                         & 1  &    1      & $\OA(s^u, s^{u-1}, s, 2)$  & $\LHD(s^u, 1)$           \\ \hline
  \multirow{6}{2cm}{$s\geq 3$}       & \multirow{6}{2cm}{$2$}              & \multirow{6}{2cm}{$s-1$}& 1  &   $s$     & $\OA(s^u, s, s, 2)$        & $\LHD(s^u, s^{u-2})$           \\
                                     &                                     &                         & 1  &   $s$        & $\OA(s^u, s^{u-2}, s, 2)$  & $\LHD(s^u, s)$           \\
                                     &                                     &                         & 2  &   $s-1$      & $\OA(s^u, s-1, s, 2)$      & $\LHD(s^u, 2s^{u-2})$         \\
                                     &                                     &                         & 2  &   $s-1$      & $\OA(s^u, 2s^{u-2}, s, 2)$ & $\LHD(s^u, s-1)$          \\
                                     &                                     &                         &$2<v\leq s-1$& $s-v+1$& $\OA(s^u, s-v+1, s, 2)$    & $\LHD(s^u, vs^{u-2})$        \\
                                     &                                     &                      &$2<v\leq s-1$& $s-v+1$& $\OA(s^u, vs^{u-2}, s, 2)$ & $\LHD(s^u, s-v+1)$        \\ \hline
  \end{tabular}}
\end{center}}
 \end{table}

\begin{proposition}\label{prop:upperbound}
Given positive integers $s, u>2$, and $2<u_1\leq u$, suppose any $u_1$ vectors of $\{{\bf b}_1, \ldots, {\bf b}_{n^*}\}$ are independent.  We have
\begin{eqnarray}\label{eq:$n^*$}
                 \max n^*\leq
                  \begin{cases}
                   u_1+1,      &  s\leq u_1, \\
                   s+u_1-2,    &  s>u_1\geq 3 \ \mbox{and} \ s \ \mbox{is odd},\\
                   s+u_1-1,    &  \mbox{in all other cases}.
                   \end{cases}
 \end{eqnarray}
\end{proposition}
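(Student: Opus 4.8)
The plan is to recast the hypothesis in the language of arcs in finite projective geometry. Since the last $u-u_1$ entries of every ${\bf b}_i$ vanish, I regard ${\bf b}_{i_1},\ldots,{\bf b}_{i_{n^*}}$ as vectors in $GF(s)^{u_1}$; the requirement that any $u_1$ of them be linearly independent says exactly that, viewed as the columns of a $u_1\times n^*$ matrix, they generate an $[n^*,u_1]$ MDS code over $GF(s)$, equivalently that they form an $n^*$-arc in the projective space $\mathrm{PG}(u_1-1,s)$, i.e.\ a set of points any $u_1$ of which are independent. The all-nonzero-coordinate property of the ${\bf b}_i$ is irrelevant to an upper bound; only the independence condition is used, and one may also assume $n^*\ge u_1$, since otherwise each claimed bound already exceeds $n^*$. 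The argument rests on three standard facts: (F1) an $N$-arc in $\mathrm{PG}(k-1,s)$ with $k\ge2$ satisfies $N\le s+k-1$; (F2) projecting an $N$-arc from $r$ of its own points yields an $(N-r)$-arc in $\mathrm{PG}(k-1-r,s)$; and (F3) the dual of an $[N,u_1]$ MDS code is an $[N,N-u_1]$ MDS code.

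In the generic case $s>u_1$ with $s$ even, the bound $n^*\le s+u_1-1$ is exactly (F1) for $k=u_1$, which I would prove by projecting from $u_1-2$ arc points onto a line $\mathrm{PG}(1,s)$: the remaining $n^*-(u_1-2)$ points map to distinct points among the $s+1$ on the line. In the case $s\le u_1$ I would argue by contradiction via (F3): if $n^*\ge u_1+2$, the dual is an MDS code of dimension $k'=n^*-u_1\ge2$, and (F1) applied to its arc gives $n^*\le s+k'-1=s+(n^*-u_1)-1$, i.e.\ $u_1\le s-1$, contradicting $s\le u_1$; hence $n^*\le u_1+1$.

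The case $s>u_1\ge3$ with $s$ odd requires sharpening (F1) by one. Suppose for contradiction that $n^*=s+u_1-1$. Projecting the arc from $u_1-3$ of its points, using (F2), yields an arc in $\mathrm{PG}(2,s)$ with $n^*-(u_1-3)=s+2$ points, where the pairwise distinctness of the images and the no-three-collinear property both follow from the independence of any $u_1$ of the original points. But an $(s+2)$-arc in $\mathrm{PG}(2,s)$ is a hyperoval, which does not exist when $s$ is odd, the maximum plane arc having only $s+1$ points. This contradiction gives $n^*\le s+u_1-2$.

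The main obstacle is precisely this odd case, where the trivial bound fails to be sharp: the improvement relies on the classical finite-geometry fact that $\mathrm{PG}(2,s)$ contains no $(s+2)$-arc for $s$ odd (Segre's theorem). The step needing the most care is verifying that the projection in (F2) genuinely produces a planar arc of size $s+2$, with distinct images and no three collinear; each such check, however, reduces to the independence of at most $u_1$ of the original arc points. For $s$ even the same projection produces an honest hyperoval rather than a contradiction, which is exactly why the bound cannot be improved below $s+u_1-1$ in that subcase.
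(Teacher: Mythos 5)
Your proof is correct, and it is worth comparing with the paper's, because the two share the same first step but then diverge in how much they prove versus cite. The paper's own proof makes exactly the translation you do — since the last $u-u_1$ coordinates of the ${\bf b}_i$'s vanish and only the independence condition matters, the vectors restricted to their first $u_1$ coordinates generate an $\OA(s^{u_1}, n^*, s, u_1)$ of index unity — and then simply invokes Theorem 2.19 of Hedayat, Sloane and Stufken (1999) (Bush's theorem), whose three bounds are precisely the right-hand side of (\ref{eq:$n^*$}). You instead pass to the cryptomorphic formulation as an $n^*$-arc in $\mathrm{PG}(u_1-1,s)$ (equivalently an $[n^*,u_1]$ MDS code) and actually prove the three bounds: the generic bound $s+u_1-1$ by projecting from $u_1-2$ arc points onto a line, the case $s\le u_1$ by MDS duality plus the same line bound, and the odd-$s$ sharpening by projecting from $u_1-3$ arc points into $\mathrm{PG}(2,s)$ and invoking the nonexistence of $(s+2)$-arcs there when $s$ is odd. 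Your case analysis is exhaustive (since $u_1>2$ is assumed), and each projection and duality step is correctly reduced to the independence of at most $u_1$ of the original points, so in effect you have reproved the theorem the paper cites. Two remarks. First, the nonexistence of hyperovals for odd $s$ is the one external fact you still lean on, and it is the same crux hidden inside the cited theorem; note it follows from an elementary parity count (every line meets an $(s+2)$-arc in $0$ or $2$ points, so lines through a point off the arc pair up its $s+2$ points, forcing $s$ even) and is older and weaker than Segre's theorem, which characterizes $(s+1)$-arcs as conics — your attribution is slightly off, though harmless. Second, the trade-off: the paper's proof is three lines plus a citation, while yours is nearly self-contained and makes transparent exactly where each bound, in particular the loss of one column when $s$ is odd, comes from.
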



\noindent{\bf Remark 1.}   According to the proof of Proposition \ref{prop:upperbound}, the maximum value of $n^*$ is not greater than the maximum value of $m$ in an $OA(s^{u_1}, m, s, u_1)$. It shall be noted that, however, it is possible to give an upper bound tighter than that given by Proposition \ref{prop:upperbound},  for example, for $u_1=2$, the maximum value of $n^*$ is $s-1$, but the maximum value of $m$ in an $\OA(s^2, m, s, 2)$ is {\bf $s+1$}.

\section{Tables for Three-level Qualitative Factors}

This section tabulates the marginally coupled designs with three-level qualitative factors obtained by the proposed methods for practical use.
Tables \ref{tb-simpleconstructiondesigns} and \ref{tb-designs} present the designs constructed in Theorems \ref{thm-simple-construction} and \ref{theorem:subspace-construction}, respectively, where $\overline u_1=u-u_1$,
and the symbol $*$ indicates the case of $v=n^*$.

\begin{table}
{\tabcolsep=10pt
\renewcommand{\arraystretch}{0.8}
 \begin{center}
 \caption{$MCD(D_1, D_2)$s with $3^u$ runs by Theorem \ref{thm-simple-construction}, $u=2,3,4,5$\label{tb-simpleconstructiondesigns}}
 \scalebox{0.8}{
  \begin{tabular}{c c c ll ll} \hline
  \multirow{2}{0.5cm}{$u$} & \multirow{2}{0.5cm}{$u_1$} &\multirow{2}{0.5cm}{$n_A$} & \multicolumn{2}{c}{By item $(i)$ }                              &   \multicolumn{2}{c}{By item $(ii)$}\\\cline{4-7}
                           &                          &                         & \multicolumn{1}{c}{${D}_1$}  & \multicolumn{1}{c}{${D}_2$}  & \multicolumn{1}{c}{${D}_1$} & \multicolumn{1}{c}{${D}_2$} \\ \hline
   2  &   1   &   3                                                             & $\OA(3^2, 1, 3, 1)$               & $\LHD(3^2, 3)$                & $\OA(3^2, 3, 3, 2)$             &  $\LHD(3^2, 1)$\\
   2  &   2   &   2                                                             & $\OA(3^2, 2, 3, 2)$               & $\LHD(3^2, 2)$                & $\OA(3^2, 2, 3, 2)$             &  $\LHD(3^2, 2)$ \\
   3  &   1   &   9                                                             & $\OA(3^3, 1, 3, 1)$               & $\LHD(3^3, 9)$                & $\OA(3^3, 9, 3, 2)$             &  $\LHD(3^3, 1)$\\
   3  &   2   &   6                                                             & $\OA(3^3, 2, 3, 2)$               & $\LHD(3^3, 6)$                & $\OA(3^3, 6, 3, 2)$             &  $\LHD(3^3, 2)$\\
   3  &   3   &   4                                                             & $\OA(3^3, 3, 3, 3)$               & $\LHD(3^3, 4)$                & $\OA(3^3, 4, 3, 2)$             &  $\LHD(3^3, 3)$\\
   4  &   1   &   27                                                            & $\OA(3^4, 1, 3, 1)$               & $\LHD(3^4, 27)$               & $\OA(3^4, 27, 3, 2)$            &  $\LHD(3^4, 1)$\\
   4  &   2   &   18                                                            & $\OA(3^4, 2, 3, 2)$               & $\LHD(3^4, 18)$               & $\OA(3^4, 18, 3, 2)$            &  $\LHD(3^4, 2)$\\
   4  &   3   &   12                                                            & $\OA(3^4, 3, 3, 3)$               & $\LHD(3^4, 12)$               & $\OA(3^4, 12, 3, 2)$            &  $\LHD(3^4, 3)$\\
   4  &   4   &    8                                                            & $\OA(3^4, 4, 3, 4)$               & $\LHD(3^4, 8)$                & $\OA(3^4, 8, 3, 2)$             &  $\LHD(3^4, 4)$\\
   5  &   1   &   81                                                            & $\OA(3^5, 1, 3, 1)$               & $\LHD(3^5, 81)$               & $\OA(3^5, 81, 3, 2)$            &  $\LHD(3^5, 1)$\\
   5  &   2   &   54                                                            & $\OA(3^5, 2, 3, 2)$               & $\LHD(3^5, 54)$               & $\OA(3^5, 54, 3, 2)$            &  $\LHD(3^5, 2)$\\
   5  &   3   &   36                                                            & $\OA(3^5, 3, 3, 3)$               & $\LHD(3^5, 36)$               & $\OA(3^5, 36, 3, 2)$            &  $\LHD(3^5, 3)$\\
   5  &   4   &   24                                                            & $\OA(3^5, 4, 3, 4)$               & $\LHD(3^5, 24)$               & $\OA(3^5, 24, 3, 2)$            &  $\LHD(3^5, 4)$ \\
   5  &   5   &   16                                                            & $\OA(3^5, 5, 3, 5)$               & $\LHD(3^5, 16)$               & $\OA(3^5, 16, 3, 2)$            &  $\LHD(3^5, 5)$\\\hline
 \end{tabular}}
\end{center}}
\end{table}


 \begin{table}
{\tabcolsep=10pt
\renewcommand{\arraystretch}{0.8}
 \begin{center}
 \caption{$MCD(D_1, D_2)$s with $3^u$ runs by Theorem \ref{theorem:subspace-construction}, $u=2,3,4,5$\label{tb-designs}}
 \scalebox{0.7}{
  \begin{tabular}{c c l c c c ll ll} \hline
  \multirow{2}{0.5cm}{$u$}    &  \multirow{2}{0.5cm}{$u_1$} &\multirow{2}{0.5cm}{$v$} &  \multirow{2}{0.5cm}{$g(v)$}  & \multirow{2}{0.5cm}{$\overline u_1$}& \multirow{2}{0.5cm}{$k$}                                                                                                                      & \multicolumn{2}{c}{By item $(i)$} & \multicolumn{2}{c}{By item $(ii)$}  \\\cline{7-10}
                              &&&&&    &\multicolumn{1}{c}{${D}_1$}   & \multicolumn{1}{c}{${D}_2$} & \multicolumn{1}{c}{${D}_1$} & \multicolumn{1}{c}{${D}_2$}\\\hline
  2    &   1   & 1* &   1    &  1      & 3  & $\mbox{OA}(3^2, 1, 3, 2)$ & $\mbox{LHD}(3^2, 3)$ &$\mbox{OA}(3^2, 3, 3, 2)$  & $\mbox{LHD}(3^2, 1)$\\
  2    &   2   & 1  &   3    &  0      & 1  & $\mbox{OA}(3^2, 3, 3, 2)$ & $\mbox{LHD}(3^2, 1)$ &$\mbox{OA}(3^2, 1, 3, 2)$  & $\mbox{LHD}(3^2, 3)$\\
  2    &   2   & 2* &   2    &  0      & 2  & $\mbox{OA}(3^2, 2, 3, 2)$ & $\mbox{LHD}(3^2, 2)$ & $\mbox{OA}(3^2, 2, 3, 2)$ & $\mbox{LHD}(3^2, 2)$\\
  3    &   1   & 1* &   1    &  2      & 9  & $\mbox{OA}(3^3, 1, 3, 2)$ & $\mbox{LHD}(3^3, 9)$ & $\mbox{OA}(3^3, 9, 3, 2)$ & $\mbox{LHD}(3^3, 1)$\\
  3    &   2   & 1  &   3    &  1      & 3  & $\mbox{OA}(3^3, 3, 3, 2)$ & $\mbox{LHD}(3^3, 3)$ & $\mbox{OA}(3^3, 3, 3, 2)$ & $\mbox{LHD}(3^3, 3)$\\
  3    &   2   & 2* &   2    &  1      & 6  & $\mbox{OA}(3^3, 2, 3, 2)$ & $\mbox{LHD}(3^3, 6)$ & $\mbox{OA}(3^3, 6, 3, 2)$ & $\mbox{LHD}(3^3, 2)$\\
  3    &   3   & 1  &   9    &  0      & 1  & $\mbox{OA}(3^3, 9, 3, 2)$ & $\mbox{LHD}(3^3, 1)$ & $\mbox{OA}(3^3, 1, 3, 2)$ & $\mbox{LHD}(3^3, 9)$\\
  3    &   3   & 2  &   6    &  0      & 2  & $\mbox{OA}(3^3, 6, 3, 2)$ & $\mbox{LHD}(3^3, 2)$ & $\mbox{OA}(3^3, 2, 3, 2)$ & $\mbox{LHD}(3^3, 6)$\\
  3    &   3   & 3  &   4    &  0      & 3  & $\mbox{OA}(3^3, 4, 3, 2)$ & $\mbox{LHD}(3^3, 3)$ & $\mbox{OA}(3^3, 3, 3, 2)$ & $\mbox{LHD}(3^3, 4)$\\
  3    &   3   & 4* &   3    &  0      & 4  & $\mbox{OA}(3^3, 3, 3, 2)$ & $\mbox{LHD}(3^3, 4)$ & $\mbox{OA}(3^3, 4, 3, 2)$ & $\mbox{LHD}(3^3, 3)$\\
  4    &   1   & 1* &   1    &  3      & 27 & $\mbox{OA}(3^4, 1, 3, 2)$ & $\mbox{LHD}(3^4, 27)$& $\mbox{OA}(3^4, 27, 3, 2)$& $\mbox{LHD}(3^4, 1)$\\
  4    &   2   & 1  &   3    &  2      & 9  & $\mbox{OA}(3^4, 3, 3, 2)$ & $\mbox{LHD}(3^4, 9)$ & $\mbox{OA}(3^4, 9, 3, 2)$ & $\mbox{LHD}(3^4, 3)$\\
  4    &   2   & 2* &   2    &  2      & 18 & $\mbox{OA}(3^4, 2, 3, 2)$ & $\mbox{LHD}(3^4, 18)$& $\mbox{OA}(3^4, 18, 3, 2)$& $\mbox{LHD}(3^4, 2)$\\
  4    &   3   & 1  &   9    &  1      & 3  & $\mbox{OA}(3^4, 9, 3, 2)$ & $\mbox{LHD}(3^4, 3)$ & $\mbox{OA}(3^4, 3, 3, 2)$ & $\mbox{LHD}(3^4, 9)$\\
  4    &   3   & 2  &   6    &  1      & 6  & $\mbox{OA}(3^4, 6, 3, 2)$ & $\mbox{LHD}(3^4, 6)$ & $\mbox{OA}(3^4, 6, 3, 2)$ & $\mbox{LHD}(3^4, 6)$\\
  \hspace{-3mm}$^\diamondsuit$4    &   3   & 3  &   4    &  1      & 9  & $\mbox{OA}(3^4, 4, 3, 2)$ & $\mbox{LHD}(3^4, 9)$ & $\mbox{OA}(3^4, 9, 3, 2)$ & $\mbox{LHD}(3^4, 4)$\\
  4    &   3   & 4* &   3    &  1      & 12 & $\mbox{OA}(3^4, 3, 3, 2)$ & $\mbox{LHD}(3^4, 12)$& $\mbox{OA}(3^4, 12, 3, 2)$& $\mbox{LHD}(3^4, 3)$\\
  4    &   4   & 1  &   27   &  0      & 1  & $\mbox{OA}(3^4, 27, 3, 2)$& $\mbox{LHD}(3^4, 1)$ & $\mbox{OA}(3^4, 1, 3, 2)$ & $\mbox{LHD}(3^4, 27)$\\
  4    &   4   & 2  &   18   &  0      & 2  & $\mbox{OA}(3^4, 18, 3, 2)$& $\mbox{LHD}(3^4, 2)$ & $\mbox{OA}(3^4, 2, 3, 2)$ & $\mbox{LHD}(3^4, 18)$\\
  4    &   4   & 3  &   12   &  0      & 3  & $\mbox{OA}(3^4, 12, 3, 2)$& $\mbox{LHD}(3^4, 3)$ & $\mbox{OA}(3^4, 3, 3, 2)$ & $\mbox{LHD}(3^4, 12)$\\
  4    &   4   & 4  &   8    &  0      & 4  & $\mbox{OA}(3^4, 8,  3, 2)$& $\mbox{LHD}(3^4, 4)$ & $\mbox{OA}(3^4, 4,  3, 2)$& $\mbox{LHD}(3^4, 8)$\\
  4    &   4   & 5* &   5    &  0      & 5  & $\mbox{OA}(3^4, 5, 3, 2)$ &$\mbox{LHD}(3^4, 5)$  & $\mbox{OA}(3^4, 5, 3, 2)$ & $\mbox{LHD}(3^4, 5)$\\
  5    &   1   & 1* &   1    &  4      & 81 & $\mbox{OA}(3^5, 1, 3, 2)$& $\mbox{LHD}(3^5, 81)$& $\mbox{OA}(3^5, 81, 3, 2)$& $\mbox{LHD}(3^5, 1)$\\
  5    &   2   & 1  &   3    &  3      & 27 & $\mbox{OA}(3^5, 3, 3, 2)$& $\mbox{LHD}(3^5, 27)$& $\mbox{OA}(3^5, 27, 3, 2)$& $\mbox{LHD}(3^5, 3)$\\
  5    &   2   & 2* &   2    &  3      & 54 & $\mbox{OA}(3^5, 2, 3, 2)$& $\mbox{LHD}(3^5, 54)$& $\mbox{OA}(3^5, 54, 3, 2)$& $\mbox{LHD}(3^5, 2)$\\
  5    &   3   & 1  &   9    &  2      & 9  & $\mbox{OA}(3^5, 9, 3, 2)$& $\mbox{LHD}(3^5, 9)$& $\mbox{OA}(3^5, 9, 3, 2)$& $\mbox{LHD}(3^5, 9)$\\
  5    &   3   & 2  &   6    &  2      & 18 & $\mbox{OA}(3^5, 6, 3, 2)$& $\mbox{LHD}(3^5, 18)$& $\mbox{OA}(3^5, 18, 3, 2)$& $\mbox{LHD}(3^5, 6)$\\
  5    &   3   & 3  &   4    &  2      & 27 & $\mbox{OA}(3^5, 4, 3, 2)$& $\mbox{LHD}(3^5, 27)$& $\mbox{OA}(3^5, 27, 3, 2)$& $\mbox{LHD}(3^5, 4)$\\
  5    &   3   & 4* &   3    &  2      & 36 & $\mbox{OA}(3^5, 3, 3, 2)$& $\mbox{LHD}(3^5, 36)$& $\mbox{OA}(3^5, 36, 3, 2)$& $\mbox{LHD}(3^5, 3)$\\
  5    &   4   & 1  &   27   &  1      & 3  & $\mbox{OA}(3^5, 27, 3, 2)$& $\mbox{LHD}(3^5, 3)$& $\mbox{OA}(3^5, 3, 3, 2)$& $\mbox{LHD}(3^5, 27)$\\
  5    &   4   & 2  &   18   &  1      & 6  & $\mbox{OA}(3^5, 18, 3, 2)$& $\mbox{LHD}(3^5, 6)$& $\mbox{OA}(3^5, 6, 3, 2)$& $\mbox{LHD}(3^5, 18)$\\
  5    &   4   & 3  &   12   &  1      & 9  & $\mbox{OA}(3^5, 12, 3, 2)$& $\mbox{LHD}(3^5, 9)$& $\mbox{OA}(3^5, 9, 3, 2)$& $\mbox{LHD}(3^5, 12)$\\
  5    &   4   & 4  &    8   &  1      & 12 & $\mbox{OA}(3^5, 8, 3, 2)$& $\mbox{LHD}(3^5, 12)$& $\mbox{OA}(3^5, 12, 3, 2)$& $\mbox{LHD}(3^5, 8)$\\
  5    &   4   & 5* &    5   &  1      & 15 & $\mbox{OA}(3^5, 5, 3, 2)$& $\mbox{LHD}(3^5, 15)$& $\mbox{OA}(3^5, 15, 3, 2)$& $\mbox{LHD}(3^5, 5)$\\
  5    &   5   & 1  &   81   &  0      & 1  & $\mbox{OA}(3^5, 81, 3, 2)$& $\mbox{LHD}(3^5, 1)$& $\mbox{OA}(3^5, 1, 3, 2)$& $\mbox{LHD}(3^5, 81)$\\
  5    &   5   & 2  &   54   &  0      & 2  & $\mbox{OA}(3^5, 54, 3, 2)$& $\mbox{LHD}(3^5, 2)$& $\mbox{OA}(3^5, 2, 3, 2)$& $\mbox{LHD}(3^5, 54)$\\
  5    &   5   & 3  &   36   &  0      & 3  & $\mbox{OA}(3^5, 36, 3, 2)$& $\mbox{LHD}(3^5, 3)$& $\mbox{OA}(3^5, 3, 3, 2)$& $\mbox{LHD}(3^5, 36)$\\
  5    &   5   & 4  &   24   &  0      & 4  & $\mbox{OA}(3^5, 24, 3, 2)$& $\mbox{LHD}(3^5, 4)$& $\mbox{OA}(3^5, 4, 3, 2)$& $\mbox{LHD}(3^5, 24)$\\
  5    &   5   & 5  &   16   &  0      & 5  & $\mbox{OA}(3^5, 16, 3, 2)$& $\mbox{LHD}(3^5, 5)$& $\mbox{OA}(3^5, 5, 3, 2)$& $\mbox{LHD}(3^5, 16)$\\
  5    &   5   & 6* &   11   &  0      & 6  & $\mbox{OA}(3^5, 11, 3, 2)$& $\mbox{LHD}(3^5, 6)$& $\mbox{OA}(3^5, 6, 3, 2)$& $\mbox{LHD}(3^5, 11)$\\
  \hline
 \end{tabular}}
\end{center}}
\end{table}


Since the last $u-u_1$ entries of each ${\bf b}_i$ are zeros, to obtain the maximum value
of $n^*$, we only need to
consider the independent relationship between  the vectors with the first $u_1$ entries of ${\bf b}_i$'s.
For $s=3$, $n_B=2^{u_1-1}$ and these vectors can form a $u_1\times 2^{u_1-1}$ matrix,
which is denoted by $B_{u_1}$ in this paper. Columns of $B_{u_1}$  are arranged in an order such that
the $j$th column is determined by the $(i,j)$th entry $B_{u_1}(i,j)$ as follows:
\begin{eqnarray}\nonumber
j-1=\sum_{i=1}^{u_1}2^{u_1-i}(B_{u_1}(i,j)-1).
\end{eqnarray}
Hence the $j$th column is labeled by bold ${\bf j-1}$ in Table \ref{tb-B2}, in which
the matrices of $B_2$ to $B_5$ are presented.
Correspondingly, define $B_{u_1}^*$ to be an $n^*$-column subset of $B_{u_1}$,
such that any $u_1$ columns in it are independent.
The following is a list of the sets $B_2^*$ to $B_5^*$:
$B_2^*$ containing columns $\{{\bf 0, \bf 1}\}$ of $B_2$;
$B_3^*$ containing columns $\{{\bf 0, \bf 1, \bf 2, \bf 3}\}$ of $B_3$;
$B_4^*$ containing columns $\{{\bf 0, \bf 1,\bf 2, \bf 4, \bf 7}\}$ of $B_4$; and
$B_5^*$ containing columns $\{{\bf 0, \bf 1,\bf 2, \bf 4, \bf 9, \bf 14}\}$ of $B_5$,
where $B_2^*$ and $B_3^*$ are obtained by calculation, and $B_4^*$ and $B_5^*$ are
obtained by computer search.
All of their $n^*$'s are maximal, refer to Proposition \ref{prop:upperbound}.
With those $B_{u_1}^*$'s, one can obtain the set of column vectors
$\{{\bf b}_{i_1}, \ldots, {\bf b}_{i_{n^*}}\}$ required by Theorem \ref{theorem:subspace-construction}.

 \begin{table}[htbp]
{\tabcolsep=12pt
\renewcommand{\arraystretch}{0.8}
 \begin{center}
  \caption{Matrices $B_{u_1}$'s for $u_1=2, 3, 4, 5$ and $s=3$\label{tb-B2}}
   \scalebox{0.8}{
   \begin{tabular}{cc c cccc c cccccccc } \hline
    \multicolumn{2}{c}{$B_2$} &\multicolumn{1}{c}{} & \multicolumn{4}{c}{$B_3$} & \multicolumn{1}{c}{}& \multicolumn{8}{c}{$B_4$} \\
    {\bf 0} & {\bf 1}  && {\bf 0} & {\bf 1}& {\bf 2} & {\bf 3} && {\bf 0}& {\bf 1} & {\bf 2}& {\bf 3}&{\bf 4}& {\bf 5}&{\bf 6}&{\bf 7}\\\hline
     1      & 1        &&1 & 1 & 1 & 1                         && 1 & 1 & 1 & 1 &1 & 1 & 1 & 1\\[-5pt]
     1      & 2        &&1 & 1 & 2 & 2                         && 1 & 1 & 1 & 1 &2 & 2 & 2 & 2\\[-5pt]
            &          &&1 & 2 & 1 & 2                         && 1 & 1 & 2 & 2 &1 & 1 & 2 & 2\\[-5pt]
            &          &&  &   &   &                           && 1 & 2 & 1 & 2 &1 & 2 & 1 & 2\\\hline
            \multicolumn{16}{c}{$B_5$} \\
     {\bf 0}& {\bf 1}& {\bf 2}& {\bf 3}& {\bf 4}& {\bf 5}&{\bf 6}& {\bf 7}& {\bf 8}& {\bf 9}& {\bf 10}&{\bf 11}&{\bf 12} &{\bf 13}&{\bf 14} &{\bf 15} \\\hline
     1 & 1 & 1 & 1 &1 & 1 & 1 & 1 & 1 & 1 & 1 & 1 &1 & 1 & 1 & 1\\[-5pt]
     1 & 1 & 1 & 1 &1 & 1 & 1 & 1 & 2 & 2 & 2 & 2 &2 & 2 & 2 & 2\\[-5pt]
     1 & 1 & 1 & 1 &2 & 2 & 2 & 2 & 1 & 1 & 1 & 1 &2 & 2 & 2 & 2\\[-5pt]
     1 & 1 & 2 & 2 &1 & 1 & 2 & 2 & 1 & 1 & 2 & 2 &1 & 1 & 2 & 2\\[-5pt]
     1 & 2 & 1 & 2 &1 & 2 & 1 & 2 & 1 & 2 & 1 & 2 &1 & 2 & 1 & 2\\
 \hline
 \end{tabular}}
\end{center}}
\end{table}

\section{Space-filling Property}

One important issue of marginally coupled designs is the space-filling property of design $D_2$.   To achieve or improve the space-filling property, several approaches have been proposed; see, for example, Dragulic, Santner and Dean (2012), Joseph, Gul and Ba (2015), and Sun and Tang (2017). In our case,  one approach to improve the space-filling property is to use an optimal level replacement
with some optimization criterion when obtaining $D_2$ from $\tilde{D}_2$, as done in Leary, Bhaskar and Keane (2003); another approach is to make $D_2$ possess some guaranteed space-filling property, for example, having
uniform projections on lower dimensions. In this paper, we address this issue
through the latter approach. For $s=2$, the approach uses a concept, anti-mirror vector, defined below.

\begin{defi}
Two column vectors $v_1$ and $v_2$ of the same length with entries from $\{0,1\}$ are said to be anti-mirror vectors if
their sum is equal to the vector of all ones. We use the notation $\overline v_1=v_2$ and  $\overline v_2=v_1$.
\end{defi}
For example, $(1,1,0)^T$ is the anti-mirror vector of $(0,0,1)^T$.
It is clear that $v^T\overline v=0$, and the anti-mirrors of two different vectors are different.


For practical application, given parameters $1\leq u_1, u'_1\leq u$, item $(ii)$
of Theorem \ref{theorem:subspace-construction} can construct an $MCD(D_1, D_2)$ with
$D_1=\OA(2^u, 2^{u-u_1}, 2, 3)$ and $D_2=\LHD(2^u, 2^{u_1-1})$, and item $(i)$ can construct
an $MCD(D_1, D_2)$ with $D_1=\OA(2^u, 2^{u'_1-1}, 2, 3)$ and $D_2=\LHD(2^u, 2^{u-u'_1})$. When
setting $u_1'=u-u_1+1$, the MCD obtained by item $(i)$ has the same set of parameters
as that obtained by item $(ii)$. In this sense, for $s=2$, we only need to consider the subspace construction
by item $(i)$ of Theorem \ref{theorem:subspace-construction}.

To investigate the space-filling property of $D_2$ when $D_1$ is a two-level orthogonal array, we take a closer look at
 Step 2 of the general construction.
Recall that $\mathcal{A}=\mathcal{A}_1$ has $2^{u-u_1}$ vectors, $n_B=1$
and ${\bf b}_1=(1,\ldots,1,0,\ldots,0)^T$ with
the first $u_1$ entries being 1. As in item (ii) of Theorem~\ref{theorem:subspace-construction},  let
$\{{\bf x}_1, \ldots, {\bf x}_{2^{u-u_1}}\}$  be the vectors in $\mathcal{A}_1$, and note that
each ${\bf x}_i$ can be written as
$${\bf x}_i=({\bf 1}_{u_1}^T, {\bf y}_i^T)^T,$$
where ${\bf y}_i\neq {\bf y}_j$ for $i\neq j$.
Let ${\bf x}_0=(1,1,0,\ldots, 0)^T$
be a vector with the first two entries being 1 and the last $u_1-2$ entries being 0;
for $1\leq i\leq 2^{u-u_1}$, define ${\eta}_i=({\bf x}_0^T, \overline {\bf y}_i^T)^T$,
where $\overline {\bf y}_i$ is the anti-mirror vector of ${\bf y}_i$. We have
$\eta_i\in O({\bf x}_i)$ as ${\eta}_i^T{\bf x}_i={\bf x}_0^T{\bf 1}_{u_1} + \overline {\bf y}_i^T{\bf y}_i=0$
For each ${\bf x}_i$,
let $G({\bf x}_i)$ be a generator matrix that consists of $u-1$ independent columns of $O({\bf x}_i)$.    Set  the first column of $G({\bf x}_i)$   to be  $\eta_i$.  Generate $A_i$ based on $G({\bf x}_i)$ and obtain ${\bf d}_i=A_i\cdot(2^{u-2}, \ldots, 2, 1)^T$,
and let $\tilde D_2=({\bf d}_1, \ldots, {\bf d}_{2^{u-u_1}})$. The method is called the {\em anti-mirror arrangement} in this paper.

\begin{proposition}\label{prop:anti-mirror}
When $2\leq u_1<u-1$, the design $\tilde D_2$ obtained by the anti-mirror arrangement is an $OA(2^u, 2^{u-u_1}, 2^{u-1}, 1)$
achieving stratifications on a $2\times 2\times 2$ grid of any three dimensions.
\end{proposition}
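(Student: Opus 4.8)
The plan is to separate the statement into two parts: that $\tilde D_2$ is an $\OA(2^u, 2^{u-u_1}, 2^{u-1}, 1)$, and that it stratifies on a $2\times 2\times 2$ grid in any three dimensions. The first part is routine. For each $i$, the matrix $A_i$ is generated by the $u-1$ independent columns of $G({\bf x}_i)$, so it is a full factorial $\OA(2^u, u-1, 2, u-1)$ in which every length-$(u-1)$ binary string occurs exactly twice; after the replacement ${\bf d}_i=A_i\cdot(2^{u-2},\ldots,2,1)^T$, each of the $2^{u-1}$ levels of ${\bf d}_i$ therefore occurs exactly twice, which gives strength one. So the substance of the proposition is the stratification claim.

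The key observation I would record next is that splitting the level set $\{0,1,\ldots,2^{u-1}-1\}$ of a column ${\bf d}_i$ into the two equal halves $[0,2^{u-2}-1]$ and $[2^{u-2},2^{u-1}-1]$ is the same as reading off the most significant bit of ${\bf d}_i$. Writing a generic row of $A_i$ as $\lambda^T G({\bf x}_i)$ for $\lambda\in GF(2)^u$, the most significant bit of the corresponding entry of ${\bf d}_i$ is exactly the first coordinate of that row, namely $\lambda^T\eta_i$, since $\eta_i$ is the first column of $G({\bf x}_i)$ and carries the weight $2^{u-2}$. Consequently, for any three columns indexed by $i_1,i_2,i_3$, the cell of the $2\times 2\times 2$ grid into which a run $\lambda$ falls is $(\lambda^T\eta_{i_1},\lambda^T\eta_{i_2},\lambda^T\eta_{i_3})$. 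The map $\lambda\mapsto(\lambda^T\eta_{i_1},\lambda^T\eta_{i_2},\lambda^T\eta_{i_3})$ is linear from $GF(2)^u$ to $GF(2)^3$, and it places the same number $2^{u-3}$ of runs in each cell precisely when it is surjective, i.e.\ when $\eta_{i_1},\eta_{i_2},\eta_{i_3}$ are linearly independent over $GF(2)$. Thus the whole proposition reduces to showing that any three of the $\eta_i$ are independent.

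The decisive final step exploits the shared head of the $\eta_i$. By construction $\eta_i=({\bf x}_0^T,\overline{{\bf y}}_i^T)^T$, where ${\bf x}_0=(1,1,0,\ldots,0)^T$ is common to all $i$ and the tails $\overline{{\bf y}}_i$ are pairwise distinct, since anti-mirrors of distinct vectors are distinct. Over $GF(2)$ the only possible nontrivial dependence among three distinct nonzero vectors is $\eta_{i_1}+\eta_{i_2}+\eta_{i_3}={\bf 0}$; but the first $u_1$ coordinates of this sum equal ${\bf x}_0+{\bf x}_0+{\bf x}_0={\bf x}_0$, which is nonzero because $u_1\geq 2$ makes ${\bf x}_0$ a weight-two vector. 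Hence no such dependence exists and the three vectors are independent. Pairwise independence is automatic, as the $\eta_i$ are nonzero and distinct. The hypothesis $u_1<u-1$ merely guarantees $2^{u-u_1}\geq 3$, so that ``any three dimensions'' is meaningful.

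The part I expect to need the most care is the bookkeeping in the second paragraph: correctly matching ``lies in the upper half interval'' with ``most significant bit equals one'' and with ``the component $\lambda^T\eta_i$ equals one'', and confirming that it is indeed the first generator column $\eta_i$, of weight $2^{u-2}$, that controls the coarsest split. Once that identification is secured, the independence argument is short, and one sees transparently why the anti-mirror head ${\bf x}_0$ is chosen to have even weight two: it keeps each $\eta_i$ inside $O({\bf x}_i)$ while forcing the odd sum of three heads to be nonzero, which is exactly what upgrades pairwise independence to independence of triples.
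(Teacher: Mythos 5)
Your proof is correct and follows essentially the same route as the paper's: both identify the $2\times 2\times 2$ cell of a run with the most significant bit columns ${\bf c}_i=\lfloor {\bf d}_i/2^{u-2}\rfloor$ generated by the $\eta_i$, and both establish linear independence of any three $\eta_i$'s from the distinct anti-mirror tails (ruling out equalities) together with the common head ${\bf x}_0$ (ruling out a three-term dependence over $GF(2)$). The only difference is presentational: you phrase the final step as surjectivity of the linear map $\lambda\mapsto(\lambda^T\eta_{i_1},\lambda^T\eta_{i_2},\lambda^T\eta_{i_3})$ with balanced fibers, while the paper invokes the equivalent fact that three independent generator columns yield an $\OA(2^u,3,2,3)$.
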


For $s\geq 2$,  Proposition \ref{prop:5} provides a result of the space-filling property of $D_2$'s in marginally coupled designs in Theorem~\ref{theorem:subspace-construction}.

\begin{proposition}\label{prop:5}
If the number, $k$, of columns in $D_2$ in Theorem~\ref{theorem:subspace-construction} satisfies $k\leq (s^{u-1}-1)/(s-1)$, a $\tilde {D}_2$ that achieves stratifications on an $s\times s$ grid of any two dimensions can be constructed.
\end{proposition}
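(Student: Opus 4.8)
The plan is to exploit the one remaining degree of freedom in Step 2 of the general construction, namely the choice of the generator matrix $G({\bf x}_j)$ of each subspace $O({\bf x}_j)$, and to show that a careful choice of the \emph{first} column of each $G({\bf x}_j)$ forces the desired two-dimensional stratification. The argument will use only the number $k$ of chosen vectors ${\bf x}_1,\ldots,{\bf x}_k$ and the bound $k\le (s^{u-1}-1)/(s-1)$, so it applies uniformly to items $(i)$ and $(ii)$ of Theorem~\ref{theorem:subspace-construction}.

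First I would identify exactly which $s$-level array governs the $s\times s$ stratification of a column ${\bf d}_j$ of $\tilde D_2$. Write $G({\bf x}_j)=({\bf g}^{(j)}_1,\ldots,{\bf g}^{(j)}_{u-1})$ with each ${\bf g}^{(j)}_i\in O({\bf x}_j)$, so that the $i$th column of $A({\bf x}_j)$ is the $s$-level column $\{{\bf \lambda}^T{\bf g}^{(j)}_i:{\bf \lambda}\in S_u\}$. Since the method of replacement sets ${\bf d}_j=A({\bf x}_j)\cdot(s^{u-2},\ldots,s,1)^T$, the first column ${\bf g}^{(j)}_1$ carries the leading base-$s$ weight $s^{u-2}$, while the remaining columns contribute to each entry a value in $\{0,\ldots,s^{u-2}-1\}$. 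Hence collapsing the $s^{u-1}$ levels of ${\bf d}_j$ onto the $s$ equal groups of consecutive levels, i.e. applying $\lfloor\,\cdot\,/s^{u-2}\rfloor$, returns precisely the $s$-level array generated by ${\bf g}^{(j)}_1$. I expect this structural reduction to be the crux, since it pins the two-dimensional space-filling behaviour of $\tilde D_2$ entirely on the first columns of the generator matrices; care is needed only to verify that the lower-order digits never overflow past $s^{u-2}$, which a short carry estimate settles.

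With this reduction, the stratification requirement becomes a linear-independence requirement. For any two columns ${\bf d}_i,{\bf d}_j$, their coarsenings are the $s$-level arrays generated by ${\bf g}^{(i)}_1$ and ${\bf g}^{(j)}_1$; by the same correspondence between linear orthogonal arrays and their independent generating vectors that underlies Lemma~\ref{lemma:basic-idea}, these two arrays form an $\OA(s^u, s\times s, 2)$, equivalently each cell of the $s\times s$ grid receives $s^{u-2}$ points, if and only if ${\bf g}^{(i)}_1$ and ${\bf g}^{(j)}_1$ are linearly independent over $GF(s)$. Thus it suffices to select, for each $j=1,\ldots,k$, a nonzero vector ${\bf g}^{(j)}_1\in O({\bf x}_j)$ so that no two of them are scalar multiples, and then extend each ${\bf g}^{(j)}_1$ to a basis of the $(u-1)$-dimensional space $O({\bf x}_j)$ to complete $G({\bf x}_j)$.

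Finally I would carry out the selection by a greedy argument whose budget is exactly the stated bound. Each $O({\bf x}_j)$ has dimension $u-1$ and therefore contains $(s^{u-1}-1)/(s-1)$ distinct one-dimensional subspaces. Processing $j=1,\ldots,k$ in order, suppose distinct lines $L_1,\ldots,L_{j-1}$ have already been chosen; at most $j-1$ of the lines lying in $O({\bf x}_j)$ can coincide with them, and since $j-1\le k-1<(s^{u-1}-1)/(s-1)$ there remains an unused line in $O({\bf x}_j)$, which I take as the span of ${\bf g}^{(j)}_1$. The hypothesis $k\le(s^{u-1}-1)/(s-1)$ is precisely what prevents this budget from being exhausted, so the selection succeeds for all $k$ columns. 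Because only the basis choices inside each $O({\bf x}_j)$ were fixed, the resulting $\tilde D_2$ is still produced by the general construction and hence is the $\tilde D_2$ of a marginally coupled design as in Theorem~\ref{theorem:subspace-construction}; by the previous two paragraphs it now achieves stratification on an $s\times s$ grid of every pair of dimensions, completing the proof.
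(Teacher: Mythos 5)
Your proposal is correct and follows essentially the same route as the paper's own proof: choose the first column of each generator matrix $G({\bf x}_i)$ to be a vector ${\bf y}_i\in O({\bf x}_i)$ with no two of the ${\bf y}_i$'s proportional (possible exactly because each $(u-1)$-dimensional subspace contains $(s^{u-1}-1)/(s-1)$ distinct lines and $k$ does not exceed this number), so that the leading base-$s$ digits of the columns of $\tilde D_2$ form an $\OA(s^u,k,s,2)$. Your write-up merely makes explicit two steps the paper leaves implicit --- the carry estimate showing $\lfloor {\bf d}_j/s^{u-2}\rfloor$ recovers the first column of $A_j$, and the greedy counting argument for the line selection.
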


\section{Conclusion and {\bf Discussion}}

We have proposed a general method for constructing marginally coupled designs of $s^u$ runs in which
the design for quantitative factors is a non-cascading Latin hypercube, where $s$ is a prime power.
The approach uses the  theory of $(u-1)$-dimensional subspaces in the Galois field $GF(s^u)$.
The newly constructed marginally coupled designs with three-level qualitative factors are tabulated.  For other prime numbers of levels, marginally coupled designs can be obtained similarly.
  In addition, we discuss two cases for which guaranteed space-filling property can be obtained.

   The  results for the subspace construction in this article extend those in He et al. (2017) for two-level
qualitative factors to any $s$-level qualitative factors. The {\em Construction 2} of He, Lin and Sun (2017)
is also a special case of the general construction in this article. The reason is as follows.
There are $s+1$ matrices of size $s^u\times (s^{u-1}-1)/(s-1)$, denoted by $C_1, \ldots, C_{s+1}$, each of which contains $s$ replications
of the linear saturated orthogonal array $\OA(s^{u-1}, (s^{u-1}-1)/(s-1), s, 2)$.
According to their construction procedure, the matrix $C_i$ is corresponding to the $(u-1)$-dimensional subspace generated by
$\{{\bf e}_1, \ldots, {\bf e}_{u-2}, {\bf e}_{u-1}+\alpha_{i-1}{\bf e}_u\}$ for $1\leq i\leq s$,
and $C_{s+1}$ is corresponding to the $(u-1)$-dimensional subspace generated by
$\{{\bf e}_1, \ldots, {\bf e}_{u-2}, {\bf e}_u\}$.
They are respectively identical to the $(u-1)$-dimensional subspaces $O({\bf x}_1), \ldots, O({\bf x}_{s+1})$,
where ${\bf x}_1={\bf e}_u$, ${\bf x}_i={\bf e}_{u-1}-\alpha_{i-1}^{-1}{\bf e}_u$ for $2\leq i\leq s$, and
${\bf x}_{s+1}={\bf e}_{u-1}$. Therefore, in the general construction, by choosing such ${\bf x}_1, \ldots, {\bf x}_k$,
for $1\leq k < s+1$, and choosing ${\bf z}_1, \ldots, {\bf z}_m$ from the set of
$\cup_{j=k+1}^{s+1}O({\bf x}_j)\setminus (\cup_{i=1}^k O({\bf x}_i))$,
one can obtain the marginally coupled design provided by {\em Construction 2} of He, Lin and Sun (2017).

  For practitioners, three related issues need further investigations. One is that, the low-dimensional projection space-filling property of the quantitative factors for each level of a qualitative factor; the second one is
to improve the space-filling property of the quantitative factors in 3 to 4 dimensions, when the two-dimensional
uniform projections are already obtained; and the last one is to construct designs with good coverage if
perfect space-filling property under some criterion  is not expected. We hope to study them and report our results in future.

\section*{Appendix}
\noindent {\it Proof of Lemma \ref{lemma:general-union-O(e)}}

\begin{proof}
For $1\leq i\leq u_1$ and any vector ${\bf x}=(x_{1}, \ldots, x_u)^T \in S_u \setminus O({\bf e}_i)$, we have ${\bf x}^T{\bf e}_i\neq 0$,
that means $x_i\neq 0$. Thus, for any ${\bf x}\in \mathcal{A}$, we have $x_1=1$, $x_i\in GF(s)\setminus \{0\}$
for $i=2,\ldots, u_1$, and $x_j\in GF(s)$ for $j=u_1+1, \ldots, u$. So,
the conclusion follows.
\end{proof}

\noindent {\it Proof of Theorem \ref{thm-simple-construction}}

\begin{proof}
As every ${\bf z}_i$ is not in any of $O({\bf x}_j)$,  every ${\bf x}_j$ is not in any of $O({\bf z}_i)$.
The conclusion follows by the definition of $\mathcal{A}$,
Lemma \ref{lemma:basic-idea}, and Lemma \ref{lem:D1-D2-condition}.
Because in both items $(i)$ and $(ii)$, $O({\bf x}_i)\neq O({\bf x}_j)$ when $i\neq j$,
${\bf d}_i$ cannot be transformed to ${\bf d}_j$ by level permutations. Thus
$D_2$'s are non-cascading Latin hypercubes.
\end{proof}

\noindent{\it Proof of Proposition \ref{thm-combination is impossible}}

\begin{proof}
Suppose ${\bf z}=\sum_{i=1}^{u_1}\lambda_i {\bf e}_i$  has $l$ nonzero coefficients $\lambda_{i_1}, \ldots, \lambda_{i_l}$,
where $1\leq i_j\leq u_1$ and $ 2\leq l\leq u_1$. Denote by $\lambda^*=\sum_{j=1}^{l-1}\lambda_{i_j}$, and let
${\bf x}=(x_1, \ldots, x_u)^T$. If $\lambda^*$ is nonzero, take $x_{i_l}=-\lambda_{i_l}^{-1}\lambda^*$ and
all the other $x_i$'s equal 1, then ${\bf x}\in\mathcal{A}$ since the first $u_1$ entries of ${\bf x}$ are nonzero.
More specifically, the first entry of ${\bf x}$ is 1, and
$${\bf z}^T{\bf x}=\sum_{i=1}^{u_1}\lambda_ix_i=\sum_{j=1}^{l}\lambda_{{i_j}}x_{i_j}=\sum_{j=1}^{l-1}\lambda_{i_j}\cdot 1
+ \lambda_{i_l}\cdot x_{i_l}=\lambda^* - \lambda_{{i_l}}\cdot\lambda_{{i_l}}^{-1}\lambda^*=0,$$
where the first equality holds because the last $u-u_1$ entries of ${\bf z}$ are zeros.
Otherwise, if $\lambda^*=0$, we must have $l-1\geq 2$, and one can take $x_{i_{l-1}}=\alpha_2$,
$x_{{i_l}}=-\lambda_{{i_l}}^{-1}\lambda_{i_{l-1}}(\alpha_2-1)$, and all other $x_i$'s equal 1.
Note for $s>2$, we have $\alpha_2\neq 1$, hence $x_{{i_l}}\neq 0$ and $\bf x\in \mathcal{A}$ again.
In addition,
$${\bf z}^T{\bf x}=\sum_{i=1}^{u_1}\lambda_ix_i=\sum_{j=1}^{l}\lambda_{{i_j}}x_{i_j}
=\sum_{j=1}^{l-1}\lambda_{{i_j}}\cdot 1 + \lambda_{i_{l-1}}\cdot(\alpha_2 - 1)- \lambda_{{i_l}}\cdot\lambda^{-1}_{{i_l}}\lambda_{i_{l-1}}(\alpha_2-1)=0.$$
So, there always exists an $\bf x\in \mathcal{A}$, such that ${\bf z}\in O({\bf x})$.
\end{proof}

\vspace{4mm}
\noindent{\it Proof of Proposition \ref{prop:s-level}}

\begin{proof}
First, consider $v=1$. As $(\sum_{j=1}^{u_1}\lambda_je_j)^T{\bf b}_1= 0$, we have
\begin{equation}\nonumber
\lambda_1b_{11} + \lambda_2b_{12} + \cdots + \lambda_{u_1}b_{1u_1}=0.
\end{equation}
There are $s^{u_1-1}$ solutions for such an equation, hence there are $s^{u_1}-s^{u_1-1}=(s-1)s^{u_1-1}$
combinations in $\overline E_1$.

For $v=2$, as $(\sum_{j=1}^{u_1}\lambda_je_j)^T{\bf b}_i= 0$ for $i=1,2$, then
\begin{eqnarray}\nonumber
\left\{
   \begin{array}{cc}
    \lambda_1b_{11} + \lambda_2b_{12} + \cdots + \lambda_{u_1}b_{1u_1} &=0, \label{eq:1}\\
    \lambda_1b_{21} + \lambda_2b_{22} + \cdots + \lambda_{u_1}b_{2u_1} &=0, \label{eq:2}
    \end{array}
    \right.
\end{eqnarray}
which has $s^{u_1-2}$ solutions since ${\bf b}_1$ and ${\bf b}_2$ are independent.
However, elements in $\overline E_1\cap \overline E_2$ should not be the solution of
neither of the two equations.
Then, we have
$$\mid \overline E_1\cap \overline E_2 \mid =\mid E\setminus ( E_1\cup E_2)\mid=s^{u_1}- [{2\choose 1}s^{u_1-1} - {2\choose 2}s^{u_1-2}]=(s-1)^2s^{u_1-2}.$$

For $1\leq v\leq u_1$, as any $u_1$ elements of $\{{\bf b}_1, {\bf b}_2, \ldots, {\bf b}_{n^{*}}\}$ are independent, we have
{\small
\begin{eqnarray}
 \mid \cap_{i=1}^v \overline E_i \mid =\mid E\setminus \cup_{i=1}^{v} E_{i}\mid &=&s^{u_1}-[{v\choose 1}s^{u_1-1} -{v\choose 2}s^{u_1-2} + \cdots +
                                                          (-1)^{v-1}{v\choose v} s^{u_1-v}]\nonumber\\
                                                      &=&s^{u_1}[1-{v\choose 1}s^{-1} + \cdots +(-1)^{v}{v\choose v}s^{-v}]\nonumber\\
                                                      &=&(s-1)^vs^{u_1-v}.\nonumber
\end{eqnarray}}
For $u_1+1\leq v\leq n^*$, the intersection of any $t\geq u_1$ sets of $E_i$'s only contains one vector, namely the zero column vector. Since
any $u_1$ elements of $\{{\bf b}_1, {\bf b}_2, \ldots, {\bf b}_{n^{*}}\}$ are independent, we have
\begin{eqnarray}
 \mid \cap_{i=1}^v \overline E_i \mid& = & \mid E\setminus \cup_{i=1}^{v} E_{i}\mid \\
 &=&s^{u_1}-[{v\choose 1}s^{u_1-1} -{v\choose 2}s^{u_1-2} + \cdots +
                                                         (-1)^{u_1-1}{v\choose u_1} s^{u_1-u_1}\nonumber\\
                                                      & &+ (-1)^{u_1}{v\choose u_1+1}\cdot 1 + \cdots + (-1)^{v-1}{v\choose v}\cdot 1]\nonumber\\
                                                      &=&s^{u_1}[1-{v\choose 1}s^{-1} + \cdots +(-1)^{u_1}{v\choose u_1}s^{-u_1}] + \sum_{i=u_1+1}^v(-1)^i{v\choose i}\nonumber\\
                                                      &=&m^*.\nonumber
\end{eqnarray}
\end{proof}

\noindent{\it Proof of  Theorem \ref{theorem:subspace-construction}}

\begin{proof}
Followed by Lemma \ref{lemma:Ei-and-Mi}, for any ${\bf z}\in \cap_{j=1}^v\overline E_{i_j}$ and
${\bf x}\in \cup_{j=1}^v\mathcal{A}_{i_j}$, we have ${\bf z}\notin O({\bf x})$.
Thus, by Lemmas \ref{lemma:basic-idea} and   \ref{lem:D1-D2-condition},
the $(D_1, D_2)$'s constructed in both items are marginally coupled designs.
In addition, both items $(i)$ and $(ii)$,
$O({\bf x}_i)\neq O({\bf x}_j)$ when $i\neq j$, which implies that ${\bf d}_i$ cannot be obtained
from ${\bf d}_j$ by level permutations. Therefore, $D_2$'s are non-cascading Latin hypercubes.
\end{proof}

\noindent{\it Proof of  Proposition \ref{prop:upperbound}}

\begin{proof}
Since any $u_1$ vectors of $\{{\bf b}_1, \ldots, {\bf b}_{n^*}\}$ are independent,
one can use them to obtain an $\OA(s^{u_1}, n^*, s, u_1)$. The run size here is $s^{u_1}$,
 not $s^u$, because the last $u-u_1$ entries of ${\bf b}_i$'s are zeros.
Note that the maximum value of $n^*$ must not be greater than the maximum value of $m$ for an $\OA(s^{u_1}, m, s, u_1)$ to exist. The right hand side of (\ref{eq:$n^*$})  are the upper bounds of $m$ for different cases, which were provided by
Theorem 2.19 of Hedayat, Sloane and Stufken (1999) .
\end{proof}

\noindent{\it Proof of  Proposition \ref{prop:anti-mirror}}

\begin{proof}
It is straightforward to see $\tilde D_2$ is an $\OA(2^u, 2^{u-u_1}, 2^{u-1}, 1)$. For $u-u_1>1$ and therefore $2^{u-u_1}>3$, consider a subarray $({\bf d}_p, {\bf d}_q, {\bf d}_l)$  of $\tilde D_2$, for $1\leq p< q< l\leq 2^{u-u_1}$.
Let ${\bf c}_i=\lfloor{\bf d}_{i}/2^{u-2}\rfloor$. As ${\bf d}_{i}=A_i\cdot(2^{u-2}, \ldots, 2, 1)^T$, ${\bf c}_i$ is the first column of $A_i$. In addition, $({\bf c}_p, {\bf c}_q, {\bf c}_l)$ is the projection of $({\bf d}_p, {\bf d}_q, {\bf d}_l)$
on the $2\times 2\times 2$ grid. Because $A_i$ is constructed by $G({\bf x}_i)$,  ${\bf c}_i$ is generated from $\eta_i$. As ${\bf y}_i\neq {\bf y}_j$ for $i\neq j$, we have $\overline {\bf y}_i\neq \overline {\bf y}_j$.
Since the last $u-u_1$ entries of $\eta_i$ is $\overline {\bf y}_i$,
$\eta_p, \eta_q$ and $\eta_l$ are three different columns. In addition, $\eta_{p}+\eta_q \neq \eta_l$
because the first $u_1$ entries
of $\eta_p, \eta_q, \eta_l$ are equal to ${\bf x}_0=(1,1,0, \ldots, 0)^T$. As a result, $\eta_{p}, \eta_q, \eta_l$ are three independent
column vectors.  Thus, the array $({\bf c}_p, {\bf c}_q, {\bf c}_l)$ is an $OA(2^u, 3, 2, 3)$, and
the conclusion follows.
\end{proof}

\noindent{\it Proof of  Proposition \ref{prop:5}}

\begin{proof}
In the subspace construction of Theorem \ref{theorem:subspace-construction}, for $i=1,\ldots,k$,
each $O({\bf x}_i)$ contains a set of $(s^{u-1}-1)/(s-1)$ different column vectors,
the first nonzero entry of each of which is equal to $1$.
If $k\leq (s^{u-1}-1)/(s-1)$, one can always choose ${\bf y}_i\in O({\bf x}_i)$,
such that ${\bf y}_i\neq \alpha {\bf y}_j$ for $1\leq i\neq j\leq k$ and any $\alpha\in GF(s)$.
Let ${\bf y}_i$ be the first column of $G({\bf x}_i)$ which is used to obtain $A_i$ and  consists of $u-1$ independent columns of $O({\bf x}_i)$.  For such $\{A_1, \ldots, A_k\}$,  the first $k$ columns  form an $\OA(s^u, k, s, 2)$, which guarantees
$\tilde D_2$ to achieve stratifications on an $s\times s$ grid of any two dimensions.
\end{proof}

\begin{proposition}\label{prop-the-bound-for-intersectionset}
The set $\cap_{i=1}^{n_B}\overline E_i$ is equal to $(i)$
$\{{\bf e}_{i_1} + {\bf e}_{i_2} + \cdots + {\bf e}_{i_{2t+1}} \mid 2t+1\leq u_1, 1\leq i_1<i_2<\ldots< i_{2t+1}\leq u_1\}$
when $s=2$, or equal to $(ii)$  $\{ \alpha{\bf e}_i \mid \alpha\in GF(s)\setminus\{0\}, i=1, \ldots, u_1 \}$ when $s>2$.
\end{proposition}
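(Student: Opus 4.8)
The plan is to reduce membership in $\cap_{i=1}^{n_B}\overline E_i$ to a single orthogonality condition and then treat the two level cases separately, leaning on Proposition~\ref{thm-combination is impossible} for the harder direction when $s>2$. First I would record the defining property: a vector ${\bf z}=\sum_{j=1}^{u_1}\lambda_j{\bf e}_j\in E$ lies in $\cap_{i=1}^{n_B}\overline E_i$ precisely when ${\bf z}^T{\bf b}_i\neq 0$ for every $i$. Since the ${\bf b}_i$'s run over all vectors whose first entry is $1$, whose next $u_1-1$ entries are nonzero, and whose remaining entries are zero (by the proof of Lemma~\ref{lemma:general-union-O(e)}), and since the last $u-u_1$ entries of ${\bf z}$ vanish, this says exactly that $\lambda_1+\sum_{j=2}^{u_1}\lambda_j b_j\neq 0$ for every choice of nonzero $b_2,\ldots,b_{u_1}\in GF(s)$.

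For case $(i)$, $s=2$, there is a single admissible value of the first $u_1$ entries, namely ${\bf b}_1=(1,\ldots,1,0,\ldots,0)^T$, so $n_B=1$ and the condition collapses to ${\bf z}^T{\bf b}_1\neq 0$. Over $GF(2)$ this is the parity statement $\sum_{j=1}^{u_1}\lambda_j\equiv 1\pmod 2$, which holds if and only if an odd number of the $\lambda_j$ equal $1$; these are exactly the sums ${\bf e}_{i_1}+\cdots+{\bf e}_{i_{2t+1}}$ with $2t+1\leq u_1$, giving the claimed set.

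For case $(ii)$, $s>2$, I would prove the two inclusions. The inclusion $\supseteq$ is a one-line check: if ${\bf z}=\alpha{\bf e}_i$ with $\alpha\neq 0$, then ${\bf z}^T{\bf b}_k$ equals $\alpha$ (if $i=1$) or $\alpha b_{ki}$ with $b_{ki}\neq 0$ (if $2\leq i\leq u_1$), hence is nonzero for every $k$, so ${\bf z}\in\cap_{i=1}^{n_B}\overline E_i$. For $\subseteq$, I would rule out everything else: the zero vector satisfies ${\bf z}^T{\bf b}_k=0$ and is excluded, while any ${\bf z}$ with at least two nonzero coefficients is handled directly by Proposition~\ref{thm-combination is impossible}, which furnishes an ${\bf x}\in\mathcal{A}$ with ${\bf z}\in O({\bf x})$; restricting to the first $u_1$ coordinates identifies the relevant ${\bf b}_k$, whence ${\bf z}^T{\bf b}_k=0$, i.e. ${\bf z}\in E_k$ and ${\bf z}\notin\overline E_k$.

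The computation itself is routine; the only point requiring care, and the crux of the argument, is the reverse inclusion in case $(ii)$. Rather than re-deriving it, the plan is to invoke Proposition~\ref{thm-combination is impossible}, observing that because ${\bf z}$ is supported on the first $u_1$ coordinates only the first $u_1$ entries of the witnessing ${\bf x}\in\mathcal{A}$ matter, and those coincide with some ${\bf b}_k$. This makes the whole proposition a short corollary of results already in hand.
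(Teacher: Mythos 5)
Your proposal is correct and follows essentially the same route as the paper: the parity argument over $GF(2)$ for case $(i)$, and for case $(ii)$ the reduction of the hard inclusion to Proposition~\ref{thm-combination is impossible}, with the reverse inclusion checked directly. The only cosmetic difference is that where the paper cites Lemma~\ref{lemma:Ei-and-Mi} (with $v=n_B$) to pass from ${\bf z}\in\cap_{i=1}^{n_B}\overline E_i$ to ${\bf z}\notin O({\bf x})$ for all ${\bf x}\in\mathcal{A}$, you re-derive that implication inline by restricting to the first $u_1$ coordinates to locate the relevant ${\bf b}_k$.
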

\begin{proof}
For $s=2$, we have $n_B=1$, $\mathcal{A}=\mathcal{A}_1$, and ${\bf b}_1=(1,\ldots, 1, 0, \ldots, 0)^T$ where the first $u_1$
entries are equal to 1. If ${\bf z}\in E$ and ${\bf z}^T{\bf b}_1\neq 0$,
${\bf z}$ must be a sum of an odd number of ${\bf e}_i$'s.
Thus, item (i) follows. If ${\bf z}\in \cap_{i=1}^{n_B}\overline E_i${\bf ,}
${\bf z}\notin O({\bf x})$ for any ${\bf x}\in \mathcal{A}$ by Lemma \ref{lemma:Ei-and-Mi}. Therefore, for $s>2$,
the possible elements in $\cap_{i=1}^{n_B}\overline E_i$ can only be ${\bf z}=\alpha{\bf e}_j$
for any $\alpha\in GF(s)\setminus\{0\}$ and $j=1, \ldots, u_1$,
according to Proposition \ref{thm-combination is impossible}, while ${\bf e}_j\in \cap_{i=1}^{n_B}\overline E_i$,
for $j=1,\ldots, u_1$. Combining these two results, item (ii) follows.
\end{proof}

\section*{Acknowledgements}

The authors wish to thank  the Editor, an Associate Editor, and two referees for their
helpful comments which have led to the improvement of the manuscript.

Yuanzhen He is supported by the National Natural Science
Foundation of China Grant 11701033. C. Devon Lin's research was supported by the Discovery grant from Natural Sciences and Engineering
Research Council of Canada. Fasheng Sun is supported by the National Natural Science
Foundation of China Grants 11471069,  11771220 and the Fundamental Research Funds for the Central Universities.

\vspace{0.3in}
\noindent {\bf \Large References}

\def\beginref{\begingroup
                \clubpenalty=10000
                \widowpenalty=10000
                \normalbaselines\parindent 0pt
                \parskip.0\baselineskip
                \everypar{\hangindent1em}}
\def\endref{\par\endgroup}
\renewcommand{\baselinestretch}{1}

\beginref

Deng, X., Hung, Y.\ and Lin, C.D.\ (2015).   Design for computer experiments with qualitative and quantitative factors.   {\em Statistica Sinica}, {\bf 25}, 1567--1581.

Deng, X., Lin, C.D., Liu, K.W.\ and Rowe, R.K. \ (2017). Additive Gaussian process for computer models with qualitative and quantitative factors.
{\em Technometrics}, {\bf 59}, 283--292.

Draguljic, D., Santner, T.J.\ and Dean, A.M.\  (2012). Noncollapsing space-filling designs for bounded nonrectangular regions.{\em Technometrics}, {\bf 54}, 169--178.

Han, G., Santner, T.J., Notz, W.I.\ and Bartel, D.L.\ (2009).  Prediction for computer experiments having quantitative  and qualitative input variables. {\em Technometrics}, { \bf 51}, 278--288.

Handcock, M.S.\ (1991). On cascading Latin hypercube designs and additive models for
experiments. {\em  Comm. Statist. Theory Methods},  {\bf 20}, 417--439.

He, Y.,  Lin, C.D.\ and Sun, F.S.\ (2017). On the construction of marginally coupled designs. {\em Statistica Sinica}, {\bf 27}, 665--683.

He, Y.,  Lin, C.D., Sun, F.S.\ and Lv, B.J.\ (2017). Marginally coupled designs for two-level qualitative factors. {\em Journal of Statistical Planning and Inference}, {\bf 187}, 103--108.


Hedayat, A.S., Sloane, N.J.A.\ and Stufken, J.\ (1999). {\it Orthogonal Arrays: Theory and Applications}. Springer, New York.

Horn, R.A.\ and Johnson, C.R.\ (2015). {\it Matrix analysis, Second Edition}. Cambridge University Press $\&$ Posts $\&$ Telecom Press.

Huang, H., Lin, D.K.J., Liu, M.Q.\ and Yang, J.F.\ (2016). Computer experiments with both qualitative and quantitative variables. {\em Technometrics}, {\bf 58}, 495--507.

Joseph, V.R., Gul, E.\ and Ba, S.\ (2015). Maximum projection designs for computer experiments. {\em Biometrika}, {\bf 102},371--380.

Leary, S., Bhaskar, A.\ and Keane, A.\ (2003). Optimal orthogonal-array-based Latin hypercubes. {\em Journal of Applied Statistics}, {\bf 30}, 585--598.


Lin, C.D.\ and Tang, B.\ (2015). Latin hypercubes and space-filling designs. In
{\em Handbook of Design and Analysis of Experiments}.  CRC Press. Bingham, D., Dean, A., Morris, M., and
Stufken, J. ed., 593--626.

McKay, M.D., Beckman, R.J.\ and Conover, W.J. (1979). A comparison of three methods for selecting values of input variables
in the analysis of output from a computer code.  {\em Technometrics}, {\bf 21}, 239--245.


Qian, P.Z.G.\ and Wu, C.F.J.\ (2009). Sliced space-filling designs. {\em Biometrika},
 {\bf 96}, 733--739.

Qian, P.Z.G., Wu, H.\ and Wu, C.F.J.\ (2008). Gaussian process models for computer experiments with qualitative and quantitative factors. {\em Technometrics},  {\bf 50}, 383--396.

Rawlinson, J.J., Furman, B.D., Li, S., Wright, T.M.\ and Bartel, D.L. \ (2006). Retrieval, experimental, and computational assessment of the performance of total knee replacements.
{\em Journal of Orthopaedic Research Official Publication of the Orthopaedic Research Society}, {\bf 24}, 1384--1394.


Sun, F.\  and Tang, B.\  (2017). A method of constructing space-filling orthogonal designs. {\em Journal of the American Statistical Association}, {\bf 112}, 683--689.

Tang, B.\ (1993). Orthogonal array-based Latin hypercubes.  {\em Journal of the American Statistical Association},  {\bf 88}, 1392--1397.

Wu, C.F.J.\ and Hamada, M.S. (2011). {\it Experiments: Planning, Analysis, and Optimization.} John Wiley \& Sons.

Xie, H., Xiong, S., Qian, P.Z.G.\ and Wu, C.FJ.\  (2014). General sliced Latin hypercube designs. {\em Statistica Sinica}, {\bf 24}, 1239--1256.

Zhou, Q., Qian, P.Z.G. and Zhou, S.\ (2011). A simple approach to emulation for computer models with qualitative and quantitative factors. {\em Technometrics}, {\bf 53}, 266--273.

Zhou, Q., Jin, T., Qian, P.Z.G.\ and Zhou, S. (2016). Bi-directional sliced Latin hypercube designs. {\em Statistica Sinica}, {\bf 26}, 653--674.

\endref

\end{document}